\let\proof\@undefined
\let\endproof\@undefined
\newcommand{\cX}{{\mathcal X}}
\newcommand{\cC}{{\mathcal C}}
\newcommand{\cT}{{\mathcal T}}
\newcommand{\cS}{{\mathcal S}}
\newcommand{\cW}{{\mathcal W}}
\newcommand{\stevenc}[1]{\textcolor{black}{#1}}
\newcommand{\cs}[1]{\textcolor{black}{#1}}
\begin{document}

\title{Towards the fixed parameter tractability of constructing
minimal phylogenetic networks from arbitrary sets of nonbinary trees
}


\author{Steven Kelk         \and
        Celine Scornavacca 
}

\institute{S. Kelk \at
              Department of Knowledge Engineering (DKE), Maastricht University, P.O. Box 616, 6200 MD Maastricht, The Netherlands.  \\
              Tel.: +31 (0)43 38 82019\\
              Fax: +31 (0)43 38 84910 \\
              \email{steven.kelk@maastrichtuniversity.nl}           
           \and
           C. Scornavacca \at
Institut des Sciences de l'Evolution (ISEM, UMR 5554 CNRS),
Universit\'e Montpellier II, Place E. Bataillon - CC 064 - 34095 Montpellier Cedex 5, France,
\\
              \email{celine.scornavacca@univ-montp2.fr} 
}

\titlerunning{Towards the fixed parameter tractability of nonbinary trees}

\date{Received: date / Accepted: date}

\maketitle

\begin{abstract}
It has remained an open question for some time whether, given a set of not necessarily binary (i.e. ``nonbinary'') trees $\cT$ on a set of taxa $\cX$, it is possible to determine in
time $f(r) \cdot poly(m)$ whether there exists a phylogenetic network that displays \stevenc{all} the trees in $\cT$, where $r$ refers to the reticulation number of the network and $m=|\cX|+|\cT|$. Here we show that this holds if one or both of the following conditions holds:  (1) $|\cT|$ is bounded by a function of $r$; (2) the maximum degree of the nodes in $\cT$ is bounded by a function of $r$. These
sufficient conditions absorb and significantly extend known special cases, namely when all the trees in $\cT$ are binary \cite{ierselLinz2012} or $\cT$ contains exactly two nonbinary trees \cite{linzsemple2009}. We believe this result is an important step towards settling the issue for an arbitrarily
large and complex set of nonbinary trees. For completeness we show that the problem is certainly solveable in time $O( m^{f(r)} )$.
\keywords{Phylogenetics \and Fixed Parameter Tractability \and Directed Acyclic Graphs}
\end{abstract}

\section{Introduction}

A rooted phylogenetic tree  on a set of taxa $\cX$ is a directed tree in which exactly one node has indegree zero (the \emph{root}), all edges are directed away from the root, the leaves
are bijectively labelled by $\cX$ and there are no nodes with indegree and outdegree  both equal to one. Rooted phylogenetic trees, henceforth \emph{trees}, are used to model the evolution
of the set $\cX$ starting from a (distant) common ancestor, the root  \cite{SempleSteel2003,MathEvPhyl,reconstructingevolution}. Rooted phylogenetic networks, henceforth \emph{networks}, are a generalisation of trees which allow a wider array of
evolutionary phenomena to be modelled, specifically those phenomena which involve the convergence, rather than divergence, of lineages. For detailed background information on networks we refer the reader to \cite{husonetalgalled2009,HusonRuppScornavacca10,surveycombinatorial2011,twotrees,Nakhleh2009ProbSolv,Semple2007}.

A network on $\cX$ is a directed acyclic graph with  a unique node of indegree zero (the root) from which all other nodes in the graph can be reached by a directed path, the leaves are bijectively labelled by $\cX$ and there
are no nodes of indegree and outdegree one (see Figure \ref{fig:network}). Of particular interest are nodes with indegree two or higher, called \emph{reticulations}. The reticulation number $r(N)$ of a network is the sum of the indegrees of the reticulation nodes, minus the total number of reticulation nodes. It is the reticulation nodes that allow us to  simultaneously ``embed'' multiple trees
(evolutionary hypotheses) into the network, a classical biological motivation being the embedding of multiple discordant gene trees into a single species network \cite{Nakhleh2009ProbSolv}. 
\begin{figure}[h]
\centering
\includegraphics[scale=.3]{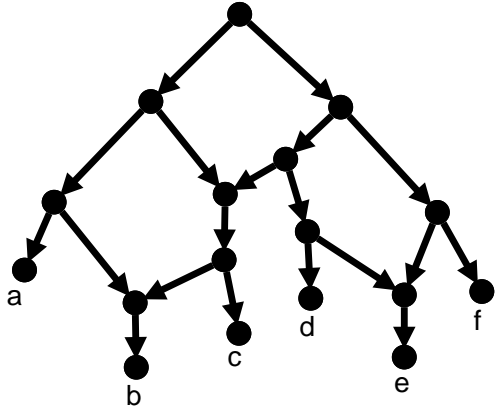}
\caption{A (binary) rooted phylogenetic network $N$ on $\cX =\{a,\ldots,f\}$. This network
has three reticulations.}
\label{fig:network}
\end{figure}
\stevenc{More formally, we say that a network $N$ on $\cX$ \emph{displays} a tree $T$ on $\cX$ if a subtree $T'$ of $N$ exists such that: (1) $T'$ is obtained by, for each reticulation in $N$, deleting all but one of its incoming edges, and (2) $T$ can be obtained from $T'$ by contracting some subset of the edges of
$T'$ (see Figure \ref{fig:display}). If every internal node of $T$ has outdegree two we say that $T$ is \emph{binary}. When we say that a tree is
\emph{nonbinary} we simply mean ``not necessarily binary''. (A binary tree is thus also a nonbinary tree). Note that in the case that $T$ is nonbinary the notion ``displays'' permits that the image of $T$ inside $N$ is more resolved than $T$ itself. This is motivated by the fact that biologists often use nodes with outdegree 3 or higher to indicate \emph{uncertainty}, rather than a hard topological constraint \cite{HusonRuppScornavacca10,davidbook}. On the other hand, if $T$ is binary, ``displays'' allows no such freedom, since $T$ is already completely resolved.}
\begin{figure}[t]
\centering
\includegraphics[scale=.3]{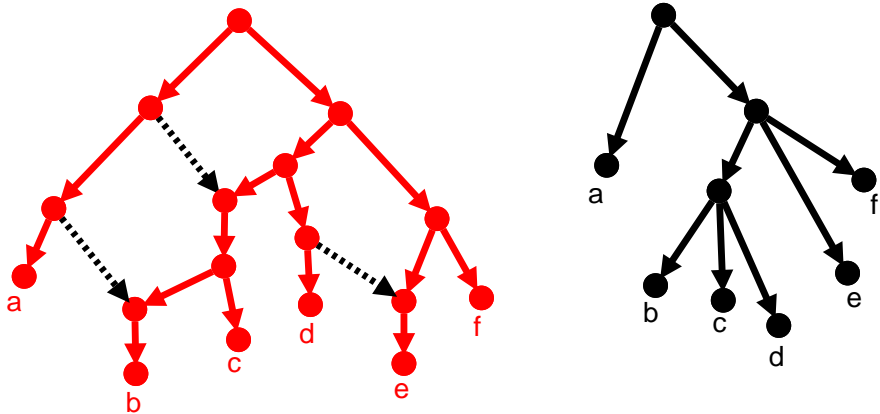}
\caption{Here we see that the network $N$ from Figure \ref{fig:network} displays the
tree $T$ shown here on the right. The dotted lines denote the reticulation edges that should
be deleted. Note that here the image of $T$ inside $N$ is more resolved than $T$ itself.}
\label{fig:display}
\end{figure}

 In recent years there has been much attention for the following optimization problem,
motivated by the desire in biology to postulate a ``most parsimonious'' network which can simultaneously explain a set of evolutionary hypotheses modelled as trees. \emph{Given a set $\cT$ of trees on the same set of taxa $\cX$, construct
a network $N$ on $\cX$ that displays each of the trees in $\cT$ with as small reticulation number as possible}. This question has stimulated a considerable amount of mathematical research, with
most attention thus far going to the case when $\cT$ consists of two binary trees. Even this restricted variant of the problem is NP-hard, APX-hard  \cs{\cite{bordewich}}  and possibly not even in APX, having similar
(in)approximability properties to the classical problem feedback vertex set on directed graphs \cs{\cite{approximationHN}}. Despite the worrying approximability news there has been considerable progress on the question
of fixed parameter tractability (FPT), where the parameter in question is the minimum reticulation number, denoted $r(\cT) = r$.  (For background information on FPT, see \cite{niedermeier2006,Flum2006,fptphylogeny}).  For two binary trees a suite of different (but related) FPT algorithms and tools \cite{hybridinterleave,hybridnet,fastcomputation},
have been developed, the theoretical state-of-the-art being $O( 3.18^{r}n )$ where $n = |\cX|$ \cite{whiddenFixed}, and a quadratic kernel \cite{sempbordfpt2007,bordewich2,quantifyingreticulation}. For more than two trees, or when $\cT$ contains nonbinary trees, there is much less known. In \cite{linzsemple2009} it is shown that, if $\cT$ contains two nonbinary trees, the problem is also FPT (also via a quadratic kernel), although there are considerably more technicalities involved than in the binary case. Very recently  \cite{ierselLinz2012} showed how to construct a quadratic kernel for an arbitrary number of binary trees and \cite{simplefpt} gave a simplified bounded-search FPT algorithm for two nonbinary trees. This begs the question: is the problem still FPT for an arbitrary number of
nonbinary trees? Although we do not yet have a full answer to this, we have identified quite broad conditions on $\cT$ under which an FPT algorithm is possible, which absorbs and extends the
conditions on $\cT$ posed by \cite{linzsemple2009} and \cite{ierselLinz2012}. Specifically, a set of \stevenc{nonbinary} trees $\cT$ on the same set of taxa $\cX$ is \emph{well-bounded} if at least one of the following two conditions hold, where $f(r)$ is used to denote a function that depends only on $r$ and not on the size of the input: (1) there are at most $f(r)$ trees in $\cT$; (2) the maximum degree ranging over all nodes in all trees in $\cT$ is at most $f(r)$. Clearly, the case solved by \cite{linzsemple2009} implies well-boundedness because $|\cT| \leq 2$, and the case solved by \cite{ierselLinz2012}
implies well-boundedness both because the maximum degree of any node is 3 \emph{and} because without loss of generality it can be assumed that $|\cT| \leq 2^{r}$ when all the trees in $\cT$
are binary. That is, sets of binary trees obey both the possibilities for well-boundedness, despite only one being necessary, and this gives clues as to the comparative tractability of the binary case.
Note that, when trees are permitted to be nonbinary, there is no obvious $f(r)$ upper-bound on the size of $\cT$.

In this article we give a constructive \stevenc{bounded-search}
algorithm which is FPT whenever $\cT$ is well-bounded. We prove this by extending a related FPT result from \cite{softwiredClusterFPT}. In that article the input is a set of \emph{clusters}, where a cluster is simply a \stevenc{subset} of $\cX$. We say that a tree $T$ on $\cX$ represents a cluster $C \subseteq \cX$ whenever $T$ contain an edge $(u,v)$ such that the set of taxa reachable in $T$ by directed paths starting from $v$, is equal to $C$.
A network $N$ on $\cX$ \emph{represents} a cluster $C$ whenever there exists \emph{some} tree $T$ on $\cX$ with the following properties: (1) $N$ displays $T$ and (2) $T$ represents $C$. Given
a set $\cT$ of trees on $\cX$, it is natural to define the set $Cl(\cT)$ as the set of all clusters represented by some tree in $\cT$. In \cite{softwiredClusterFPT} it is shown that computing
a network $N$ with minimum reticulation number that represents a set of clusters $\cC$, is FPT, again using reticulation number as the parameter. The question immediately
arises: what if we apply the result from \cite{softwiredClusterFPT} taking $\cC = Cl(\cT)$? Unfortunately, there are cases when the optimum under the cluster model can be strictly lower
than the optimum under the tree model \cite{twotrees}, which stems from the fact that a network $N$ might represent all the clusters in $Cl(\cT)$ but not display all the trees in $\cT$.  
However, it is tempting to ask whether the bounded-search algorithm given in \cite{softwiredClusterFPT} can be adapted by replacing \stevenc{intermediate} tests of the form \emph{``does $N$ represent $Cl(\cT)$?''} with
\emph{``does $N$ display all the trees in $\cT$?''}. Here we show that in many cases the answer to this question is \emph{yes}. However, there seem to be some pathological cases where the
stronger topological demands of the tree model cannot easily be captured by the cluster model. These pathological cases are excluded by our definition of well-boundedness, and the
major open question emerging from this paper is whether these pathological cases are genuinely more difficult than the well-bounded case. We suspect that they can be overcome, because given
a parameter value $r$ and an arbitrarily large set of trees $\cT$ with arbitrarily large maximum degree, the parameter $r$ does impose quite severe restraints on the topology of the trees in $\cT$
and the location of taxa in an optimum network $N$. We discuss these possibilities at the end of the article. Finally, for completeness
we also give a simple non-FPT algorithm which\cs{,} for fixed $r$\cs{,} determines in polynomial-time whether $r(\cT) \leq r$ (and if so constructs an appropriate network), establishing that even inputs that are not well-bounded can be solved in polynomial time.

Given the very close relationship between this article and \cite{softwiredClusterFPT} we have decided not to repeat all 
algorithms from \stevenc{that article. For this reason it is necessary to read \cite{softwiredClusterFPT} before, or in parallel, with this article.}
The algorithmic changes are small; we only have to adapt two steps in a much larger algorithmic procedure. Furthermore, it is relatively straightforward to argue that this adapted algorithm
is correct and definitely constructs a network $N$ that displays all the trees in $\cT$ such that $r(N) = r(\cT)$.  However, demonstrating that the running time is FPT requires much more work, and is the focus of this article. We are forced to deal with the aforementioned pathological situations
by exhaustive guessing, and well-boundedness guarantees that the branching in the search tree caused by this pessimistic guessing does not spiral out of control.


\section{Preliminaries}


\cs{Some of the basic definitions, e.g. phylogenetic tree and phylogenetic network, have already been given in the introduction. In this section we will introduce several other definitions that will be used in the rest of the article. }

\cs{A network is said to be \emph{binary} if every reticulation node has indegree 2 and outdegree 1 and every other interior node has outdegree 2. A \emph{(binary) refinement} of a tree $T$ is any (binary) tree $T'$ such that $Cl(T) \subseteq Cl(T')$.} \stevenc{(Note that every tree is a refinement
of itself).}

Given  a set of taxa $\cX$, we say that two clusters $C_1, C_2 \subset \cX$ are \emph{compatible} if either $C_1 \subseteq C_2$ or $C_1 \supseteq C_2$ or $C_1  \cap C_2=\emptyset$, and incompatible otherwise.  We say that a set of taxa $S \subseteq \cX$ is compatible with a cluster set $\cC$ if every cluster $C \in \cC$ is compatible with $S$, and incompatible otherwise.
We say that a set $S\subseteq \cX$ is an \emph{ST-set} with respect to a set of clusters $\cC$, if $S$ is compatible with $\cC$ and the clusters of $\cC|S$ are pairwise compatible, \stevenc{where $\cC|S$ is defined as the set of clusters
$\{ C \cap S | C \in \cC \}$.} An ST-set $S$ is said to be  \emph{maximal} if there is no ST-set $S' $ with $S \subset S'$. Given two taxa $x,y \in \cX$, we write $x  \rightarrow_\cC y$, if and only if every non-singleton cluster in $\cC$ containing $x$, also contains $y$.

\cs{An $r$-reticulation generator is defined as a directed acyclic multigraph, which has a single node of indegree 0 and outdegree 1, precisely $r$ reticulation nodes, and apart from that only nodes of indegree 1 and outdegree 2 \cite{softwiredClusterFPT}. The \emph{sides} of a $r$-reticulation generator are defined as the union of its edges (the \emph{edge sides}) and its reticulation nodes \stevenc{of outdegree 0} (the \emph{node sides}). 
Adding a set of taxa $X$ to an edge side $(u,v)$ of an $r$-reticulation generator consists of subdividing $(u,v)$ into a path of $|X|$ internal nodes and, for each such internal node $w_i$, adding a new leaf $w_i'$ along with an edge $(w_i,w_i')$, and labeling $w_i'$ with some taxon from $X$ in such a way that $X$ bijectively labels the new leaves. On the other hand, adding a taxon $l$ to a node side $v$ consists of adding a new leaf $y$ along with an edge $(v, y)$ and labeling $y$ with $l$.
Given a set of taxa $\cX$, the set $\mathcal{\hat{N}}^{r}_{\cX}$ is defined as the set of all networks that can be constructed by choosing some $r$-reticulation generator $G$ and then adding taxa to the sides of $G$ as described above  such that each taxon of $\cX$ appears exactly once in the resulting network $N$ and  no multi-edge is present  in $N$. (Note that a \stevenc{``fake root''}, i.e. a  single node with indegree 0 and outdegree 1, \stevenc{will still be} present in $N$,
\stevenc{but this can simply be deleted along with its incident edge.}) The resulting network $N$ is said to be a \emph{completion} of $G$ on $\cX$ while $G$ is said to be the generator \emph{underlying} $N$. 
}

Given a network $N$, we say that a \emph{switching} $T_N$ of $N$ is obtained by, for each reticulation node,
deleting all but one of its incoming edges. (The red subtree shown in Figure \ref{fig:display} is
a switching). The definition of \emph{display} given in the
introduction is thus equivalent to: a network $N$ on $\cX$ \emph{displays} a tree $T$ on $\cX$ if $N$ has some switching $T_N$ such that $T$ can be obtained from $T_N$ by contracting some subset of the edges of $T_N$. (In which case we say that $T_N$ \emph{corresponds} to $T$). 
For a set of trees $\cT$, the \emph{hybridization number} of $\cT$  is defined as  the minimum reticulation number of $N$, ranging over all networks $N$ that display all the trees in $\cT$.
For consistency with \cite{softwiredClusterFPT} we henceforth say \emph{reticulation number} of $\cT$, denoted $r(\cT)$, instead of hybridization number.

We note that the value $r(\cT)$ does not change if, in the definition above, we restrict
our attention to binary networks $N$. This follows because there is a simple construction
described in \cite[Lemma 2]{twotrees} which transforms a network $N$ that displays all
the trees in $\cT$ into a binary network $N'$ that displays all the trees in $\cT$, such
that $r(N') = r(N)$. For this reason we can
restrict our attention to binary networks. Note that, for a binary network $N$ and a tree
$T$, the statement ``$N$ displays $T$'' is equivalent to the statement ``$N$ displays
a binary refinement of $T$''.

\section{Minimizing \stevenc{the} reticulation number of well-bounded sets of trees is fixed parameter tractable}

\stevenc{Henceforth, unless otherwise stated, the parameter $r$ is obtained from the question \emph{``Is $r(\cT) \leq $r?''}. Clearly an algorithm to solve this problem can be used to solve the corresponding optimization problem i.e. by
solving the decision problem for increasing values of $r$ until $r(\cT)$ is reached.}

\stevenc{We begin by repeating} the definition of \emph{well-boundedness} already mentioned in the introduction.
\begin{definition}
A set of \stevenc{nonbinary} trees $\cT$ on the same set of taxa $\cX$ is \emph{well-bounded} if at least one of the
following two conditions hold:  (1) there are at most $f(r)$ trees in $\cT$; (2) the maximum degree ranging over all nodes in all trees in $\cT$ is at most $f(r)$.
\end{definition}

Note that if all the trees in $\cT$ are binary then $\cT$ is trivially well-bounded thanks to
condition (2). However, without loss of generality we can actually also assume that when all trees in $\cT$ are binary condition
(1) holds. This is because a (binary) network with $r$ reticulations
can display up to $2^{r}$ binary trees. It is therefore pointless to try constructing
a network with $r$ reticulations if $|\cT| > 2^r$. This places a natural $f(r)$ upper bound
on $|\cT|$ in the binary case.\\

The next result \stevenc{shows that claimed solutions
can be efficiently verified. Our main algorithm will make heavy use of this fact to prune the search space.}


\begin{proposition}
\label{prop:checkReprTree}
Given a binary network $N$ with reticulation number $r$ and a set of trees $\cT$ on $\cX$, checking whether  $N$ displays all the trees in $\cT$  can be done in  time $f(r) \cdot poly(m)$, where $m=|\cX|+|\cT|$.
\end{proposition}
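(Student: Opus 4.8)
The plan is to reduce the single-tree question ``does $N$ display $T$?'' to a polynomial-time cluster-containment test, and then to observe that a binary network of reticulation number $r$ has only $2^r$ switchings to range over. Since $N$ is binary with reticulation number $r$, every reticulation has indegree exactly $2$, and the reticulation number equals the number of reticulation nodes; hence $N$ has exactly $r$ reticulations. A switching retains, for each reticulation independently, one of its two incoming edges, so $N$ has exactly $2^r$ switchings, which can be enumerated in time $2^r \cdot poly(m)$.

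Next I would argue that each switching yields a genuine tree on the full taxon set $\cX$. After a switching every non-root node has indegree exactly $1$ (tree nodes and retained reticulations alike), and since $N$ is a finite acyclic graph with a unique source, following the unique incoming edge upward from any node eventually reaches the root; thus the switching is a spanning tree rooted at the root of $N$. Suppressing the resulting indegree-$1$/outdegree-$1$ nodes (the former reticulations) produces a tree $T_N$ on $\cX$, computable in polynomial time, whose cluster set $Cl(T_N)$ I would then list.

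I would then reduce display to cluster containment. By the definition of \emph{display}, $N$ displays $T$ if and only if some switching $T_N$ can be contracted down to $T$; since $T$ and $T_N$ are trees on the same taxon set $\cX$, this holds precisely when $T_N$ is a refinement of $T$, i.e. when $Cl(T) \subseteq Cl(T_N)$ (contracting exactly the edges of $T_N$ whose clusters are absent from $T$ recovers $T$, while conversely contraction can only destroy, never create, clusters). This containment is checkable in polynomial time. Assembling the pieces, I would compute $Cl(T_N)$ for all $2^r$ switchings once, and then declare that $N$ displays all of $\cT$ if and only if, for every $T \in \cT$, at least one switching satisfies $Cl(T) \subseteq Cl(T_N)$; the total cost is $|\cT| \cdot 2^r \cdot poly(m) \le 2^r \cdot poly(m)$, so taking $f(r) = 2^r$ gives the claim.

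I expect the only genuine subtlety to lie in the third step: confirming that ``$T$ is obtainable from $T_N$ by contracting edges'' is exactly ``$Cl(T) \subseteq Cl(T_N)$'', together with the care needed to verify that each switching really does clean up to a tree spanning all of $\cX$. Everything else is routine bookkeeping, and the decisive structural fact is simply that a binary network of reticulation number $r$ admits only $2^r$ switchings, which keeps the dependence on the input purely polynomial once $r$ is fixed.
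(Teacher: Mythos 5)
Your proposal is correct and takes essentially the same route as the paper, which likewise observes that the binary network $N$ displays at most $2^{r}$ (binary) trees and tests, for each $T \in \cT$, whether one of them is a refinement of $T$ via the cluster-containment check $Cl(T) \subseteq Cl(T_N)$, for $|\cT| \cdot 2^{r} \cdot poly(m)$ time in total. One cosmetic slip: cleaning up a switching requires not only suppressing indegree-1/outdegree-1 nodes but also deleting unlabelled outdegree-0 nodes (a tree node can lose both out-edges to reticulations that chose their other parent), though contracting the corresponding pendant edges absorbs this and your cluster-containment equivalence is unaffected.
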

\begin{proof}
Since $N$ is binary, the set $\cT(N)$ of \cs{binary} trees displayed by $N$ has cardinality at most $2^{r}$. Each tree in $\cT$ and $\cT(N)$ contains at most $2(n-1)$ clusters, \stevenc{where $n=|\cX|$.} To determine if a tree $T$ is
a refinement of a tree $T'$ we need only check that $Cl(T') \subseteq Cl(T)$ which can be done in $poly(n)$ time. In total therefore at most $|\cT| \cdot |\cT(N)| \cdot poly(n)$ time is required, which is $f(r) \cdot poly(m)$.
\end{proof}

Note that, since $|\cT|$ can be exponential in $|\cX|=n$ (see the Appendix), the result does not hold if $m=|\cX|$. This motivates our choice of $m = |\cX| + |\cT|$ as our measure
of input size.\\
\\
Before proceeding, note that a binary network $N$ on $\cX$ with $r$ reticulations can represent at most $2^{r+1}(n-1)$ clusters. \stevenc{This follows from the proof of Proposition \ref{prop:checkReprTree}}. If $N$ displays all the trees in $\cT$ then
it also represents all the clusters in $Cl(\cT)$. Hence, if $|Cl(\cT)| > 2^{r+1}(n-1)$ we can immediately conclude that $N$ does not display all the trees in $\cT$, we call this the \emph{cluster bound}. We may therefore henceforth
assume that $|Cl(\cT)| \leq 2^{r+1}(n-1)$, which is $f(r).poly(n)$. \stevenc{In any case,
$|Cl(\cT)|$ is at most $poly(m)$, because $|Cl(\cT)| \leq |\cT| \cdot 2(n-1)$.}\\
\\
Let $\cT$ be a set of nonbinary trees on $\cX$ and let $\cS$ be the set of maximal ST-sets of $Cl(\cT)$.
Given an ST-set $S\in \cS$,  \emph{reducing} $S$ in $\cT$ denotes the operation of, for each tree $T$ in $\cT$,  adding a single leaf with a new label $s$ (the same in all trees of $\cT$) as a child node of  lca$_T(S)$, deleting all labels of $S$ in $T$ and finally applying in an arbitrary order
the following steps until no more can be applied:  
(a) deleting all nodes with outdegree-0 that are not labelled
by a taxon; (b) suppressing any node with indegree-1 and outdegree-1. 
We say that $\cT$ is \emph{ST-collapsed} if all maximal ST-sets of $Cl(\cT)$ have size 1\footnote{This is consistent with the definition given in \cite{softwiredClusterFPT}: a
set of clusters $\cC$ on $\cX$ is ST-collapsed if every maximal ST-set of $\cC$ has size 1.}.

We denote with the term \emph{ST-collapsing} the operation of reducing in a set of trees $\cT$ all maximal ST-sets of $Cl(\cT)$. This is always possible because maximal ST-sets are disjoint \stevenc{(and unique)} \cite[Corollary 4]{elusiveness}. \stevenc{The maximal
ST-sets can be computed in time $poly(|\cX|, |Cl(\cT)|)$ \cite{elusiveness}, from which it can be seen that ST-collapsing
all the trees in $\cT$ can be performed in time $poly(m)$}. Note that the set of trees $\cT'$ obtained by ST-collapsing $\cT$, and the associated cluster set $Cl(\cT')$ are ST-collapsed\footnote{\stevenc{Essentially, ST-collapsing $Cl(\cT)$ is the operation of
identifying and reducing all the ``maximal common pendant subtrees'' of $\cT$, see \cite{simplefpt,elusiveness} for a more detailed discussion of this.}}.

\begin{lemma}
\label{lem:stcollapse1}
Let $\cT$ be a set of trees  on $\cX$, and let $\cT'$ be the set of trees obtained by ST-collapsing $\cT$. Then any network $N'$ that displays all the trees in $\cT'$ can be transformed into a network
$N$ that displays all the trees in $\cT$ such that $r(N)=r(N')$, in \stevenc{$poly(m)$} time.
\end{lemma}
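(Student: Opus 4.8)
The plan is to build $N$ from $N'$ by ``uncollapsing''. Since the maximal ST-sets of $Cl(\cT)$ are disjoint and every singleton is itself an ST-set, they partition $\cX$, so each maximal ST-set $S$ corresponds to exactly one leaf, labelled $s$, in each tree of $\cT'$ and in $N'$. The crucial structural fact is that, because $S$ is an ST-set, the clusters in $Cl(\cT)|S$ are pairwise compatible and can therefore be represented \emph{simultaneously} by a single tree $T^{*}_{S}$ on leaf set $S$. I would construct $N$ by replacing, for each maximal ST-set $S$, the leaf labelled $s$ in $N'$ by a grafted copy of $T^{*}_{S}$, identifying the root of $T^{*}_{S}$ with the former position of $s$. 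As this grafting is entirely tree-like and introduces no reticulation nodes, we immediately obtain $r(N)=r(N')$; and since there are at most $|\cX|$ maximal ST-sets and each $T^{*}_{S}$ is computable from $Cl(\cT)|S$ in polynomial time, the whole transformation runs in $poly(m)$ time. Note that only this one direction, from $N'$ to $N$, is required by the statement.

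It then remains to verify that $N$ displays every $T \in \cT$. Fix such a $T$. Since $N'$ displays $T'$, there is a switching $\sigma'$ of $N'$ that refines $T'$, i.e. $Cl(T') \subseteq Cl(\sigma')$. Let $\sigma$ be the corresponding switching of $N$, namely $\sigma'$ with each leaf $s$ replaced by the grafted $T^{*}_{S}$ (the grafted trees contain no reticulations, so they impose no additional switching choices). I would prove that $\sigma$ refines $T$, that is $Cl(T) \subseteq Cl(\sigma)$, which is precisely equivalent to $N$ displaying $T$.

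To establish $Cl(T) \subseteq Cl(\sigma)$ I would exploit that every maximal ST-set $S$ is compatible with every cluster of $T$, because $Cl(T) \subseteq Cl(\cT)$. Hence, relative to the partition of $\cX$ into maximal ST-sets, each cluster $C \in Cl(T)$ is either contained in a single part $S$, or is a union of whole parts. In the first case $C \in Cl(T)|S \subseteq Cl(\cT)|S \subseteq Cl(T^{*}_{S})$, so $C$ appears as a cluster inside the grafted copy of $T^{*}_{S}$, and therefore $C \in Cl(\sigma)$. In the second case $C$ maps, under ST-collapsing, to a cluster $\bar{C}$ of $T'$, namely the set of collapsed taxa $s$ for the parts comprising $C$; thus $\bar{C} \in Cl(T') \subseteq Cl(\sigma')$, and re-expanding each $s$ back into all of $S$ turns the corresponding cluster of $\sigma'$ into exactly $C$, so again $C \in Cl(\sigma)$.

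The main obstacle is the careful bookkeeping underpinning this last paragraph, rather than any deep new idea. One must confirm the restriction identity relating $Cl(T)|S$ to the clusters of the restricted tree, check that a cluster of $T$ which is a union of whole maximal ST-sets genuinely survives ST-collapsing as the cluster $\bar{C}$ of $T'$, and be careful that $S$ need not itself be a cluster of $T$ but only a union of child-clusters at $\mathrm{lca}_T(S)$, so that the ``add a leaf below the lca'' form of the reduction operation must be reconciled with the clean partition-of-clusters picture. The conceptual heart of the proof, namely that a \emph{single} pendant tree $T^{*}_{S}$ can serve every tree in $\cT$ at once, is exactly what the ST-set property guarantees; once this is in place, the display check reduces to the routine cluster-containment verification above.
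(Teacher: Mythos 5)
Your proposal is correct and follows essentially the same route as the paper's proof: the paper likewise replaces each dummy leaf $s_j$ in $N'$ by a (binary) tree on $S_j$ representing $Cl(\cT)|S_j$, notes that this adds no reticulations, and bounds the running time via the disjointness of maximal ST-sets and $|Cl(\cT)| \leq poly(m)$. The only difference is one of detail: the paper dismisses the display claim with ``the obtained network $N$ obviously displays all the trees in $\cT$,'' whereas you carry out the cluster-by-cluster verification (the dichotomy between clusters inside a part and clusters that are unions of parts), which is a sound elaboration of that ``obvious'' step rather than a different argument.
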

\begin{proof}
Let $\cS = \{S_1, \ldots, S_l\}$ be the set of maximal ST-sets of $Cl(\cT)$. For each $S_j \in \cS$ we replace the dummy taxon $s_j$ in $N'$ with a binary tree on taxon set $S_j$ that represents the set of clusters $Cl(\cT)|S_j$. The obtained network $N$ obviously \stevenc{displays} all the trees in $\cT$. By Corollary 4 of \cite{elusiveness}, maximal ST-sets are disjoint and $|\cS|$ is at most $n$.  Since \stevenc{$Cl(\cT)$ contains at most $poly(m)$} clusters, the entire transformation can be performed in $poly(m)$ time.
\end{proof}

\begin{lemma}
\label{lem:moveST}
Given a set of trees $\cT$ on $\cX$, let $N$ be a network displaying all the trees in $\cT$. Let $S$ be a \stevenc{non-singleton} ST-set
with respect to $Cl(\cT)$. Then there exists a network $N'$ displaying all the trees in $\cT$ such that $r(N') \leq r(N)$, 
$S$ is under a cut-edge in $N'$ and for each ST-set $S'$ such that $S' \cap S = \emptyset$ and $S'$ is
under a cut-edge in $N$, $S'$ is also under a cut-edge in $N'$.
\end{lemma}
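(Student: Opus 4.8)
The plan is to prove this by an extremal argument. First, by the remark immediately preceding the lemma we may assume that $N$ is binary, since ``$N$ displays $T$'' is equivalent to ``$N$ displays a binary refinement of $T$'' and binarising $N$ (via the construction of \cite{twotrees}) does not change its reticulation number. Because $S$ is a non-singleton ST-set with respect to $Cl(\cT)$, it is compatible with every cluster in $Cl(\cT)$ and the induced clusters $Cl(\cT)|S$ are pairwise compatible; hence they are represented by some tree on $S$, of which I fix a binary refinement $B_S$. The intuition I would exploit throughout is that $S$ behaves like a single ``super-taxon'': in each $T\in\cT$ the taxa of $S$ can be gathered under lca$_{T}(S)$ without disturbing any other cluster of $T$, and the internal structure on $S$ is captured by $B_S$.

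Next I would consider the family $\mathcal{F}$ of all binary networks $M$ that (i) display every tree in $\cT$, (ii) satisfy $r(M)\le r(N)$, and (iii) keep below a cut-edge every ST-set $S'$ with $S'\cap S=\emptyset$ that is already below a cut-edge in $N$. This family is non-empty since $N\in\mathcal{F}$, and for a fixed reticulation number there are only finitely many binary networks on $\cX$, so an extremal element exists. Among all members of $\mathcal{F}$ I would choose one, call it $N'$, minimising a potential $\Phi(M)$ that measures how far $S$ is from being pendant — for instance the number of edges in a smallest connected subgraph of $M$ through which all leaves labelled by $S$ are joined to a common ancestor — arranged so that $\Phi(M)=0$ precisely when $S$ sits under a single cut-edge. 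The claim is then that $\Phi(N')=0$, which is exactly the assertion of the lemma, with the defining property (iii) of $\mathcal{F}$ giving the required preservation of disjoint pendant ST-sets.

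To establish the claim I would argue by contradiction: if $\Phi(N')>0$ then some leaf $x\in S$ is attached ``away'' from the bulk of $S$, in the sense that the path joining $x$ to the other elements of $S$ passes through material not dedicated to $S$. Using the compatibility of $S$ with $Cl(\cT)$ I would detach $x$ and re-attach it nearer to the remaining elements of $S$ (sliding its pendant edge, or re-routing an incident reticulation edge) in such a way that (a) no cluster demanded by any $T\in\cT$ is destroyed — this is exactly where compatibility is essential, since relocating $x$ inside the ``$S$-region'' never separates two taxa of $\cX\setminus S$ that a displayed tree needs together, nor does it break any cluster contained in, or disjoint from, $S$ — and (b) no new reticulation is created, so $r$ does not increase while $\Phi$ strictly decreases. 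Since the relocation takes place inside the region spanned by $S$, which is disjoint from the pendant subtree of any ST-set $S'$ with $S'\cap S=\emptyset$, condition (iii) is preserved and the modified network stays in $\mathcal{F}$, contradicting minimality of $\Phi(N')$. Finally, if necessary I would replace the resulting pendant subtree on $S$ by $B_S$; as this is an internal modification of a pendant binary subtree it alters neither $r(N')$ nor the pendant status of disjoint ST-sets.

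The main obstacle is steps (a)--(b): making the local re-attachment of a misplaced taxon of $S$ precise and proving that it simultaneously preserves the display of \emph{every} tree in $\cT$ and does not increase the reticulation number. The delicate point is that in a network the taxa of $S$ may be interleaved with reticulation edges, so one must show that compatibility of $S$ with $Cl(\cT)$, together with the freedom to choose a favourable switching per tree, always licenses such a cost-free move. An equivalent way to phrase this crux is to show directly that contracting $S$ to a single leaf $s$ in $N$ yields a network on $(\cX\setminus S)\cup\{s\}$ displaying all the reduced trees with no more reticulations; one would then re-expand $s$ into $B_S$, exactly as in the proof of Lemma~\ref{lem:stcollapse1}, to place $S$ under a cut-edge.
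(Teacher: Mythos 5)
There is a genuine gap, and you have named it yourself: the entire difficulty of the lemma sits in your steps (a)--(b), which you assert but do not prove. Your extremal scheme (minimise a potential $\Phi$ over the family $\mathcal{F}$) only works if, whenever $\Phi(N')>0$, a single taxon $x\in S$ can be relocated so that $\Phi$ strictly decreases while \emph{every} tree in $\cT$ remains displayed, the reticulation number does not increase, and disjoint pendant ST-sets stay pendant. This local move is not merely ``delicate''; it is doubtful as stated. Moving one taxon $x$ while the rest of $S$ stays scattered can break clusters $C\subseteq S$ that some tree $T\in\cT$ requires (compatibility of $S$ with $Cl(\cT)$ says nothing about preserving the internal structure of $S$ one leaf at a time), and when $x$ hangs below reticulation edges there is no guarantee that any re-attachment ``nearer the bulk of $S$'' exists without restructuring that costs reticulations. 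Since $\Phi$ itself is only gestured at, the claim ``$\Phi$ strictly decreases'' has no content to check. In short, the induction step of your argument is exactly the statement to be proven, so the proposal is circular at its core.

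The paper avoids the one-taxon-at-a-time strategy entirely: it performs a \emph{single global} transformation, namely (a) delete all taxa of $S$ from $N$, (b) identify the node formerly labelled by one fixed $x\in S$ with the root of an arbitrary binary tree $T_S$ on $S$ representing $Cl(\cT)|S$, and (c) tidy up (remove unlabelled outdegree-$0$ nodes, suppress degree-two nodes, collapse multi-edges). The properties $r(N')\leq r(N)$ and preservation of disjoint pendant ST-sets are inherited from the proof of Lemma 11 of \cite{elusiveness}; the new work is the display argument, carried out at the level of switchings: if $T_N$ is a switching of $N$ corresponding to a refinement of $T\in\cT$, the transformation turns it into a switching $T_{N'}$ of $N'$ representing every cluster $C$ of $T_N$ with $C\cap S=\emptyset$ or $S\subset C$; by compatibility of $S$ the only remaining clusters of $T_N$ satisfy $C\subseteq S$, and these are re-supplied because $T_{N'}$ displays $T_S$ and $Cl(T_S)\supseteq Cl(T)|S$, whence $Cl(T_{N'})\supseteq Cl(T)$. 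Your closing reformulation (contract $S$ to a single leaf, then re-expand as in Lemma~\ref{lem:stcollapse1}) is indeed the right shape of argument --- but showing that the contracted network displays all the reduced trees with no more reticulations \emph{is} the content of the paper's proof, so deferring it leaves the proposal without its essential step.
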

The proof of the lemma can be found in the \stevenc{A}ppendix.
The following corollary \stevenc{stems} from the fact that maximal ST-sets are disjoint:
\begin{corollary}
\label{lem:stcollapse2}
Let $N$ be a network displaying all the trees in $\cT$. There exists a network $N'$
 displaying all the trees in $\cT$ such that $r(N') \leq r(N)$ 
and all maximal ST-sets (with respect to $Cl(\cT)$)
are below cut-edges.
\end{corollary}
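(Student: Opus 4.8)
The plan is to obtain $N'$ by applying Lemma~\ref{lem:moveST} repeatedly, once for each maximal ST-set of $Cl(\cT)$, and to rely on the disjointness of maximal ST-sets to guarantee that each application leaves the placements achieved by earlier applications intact. Let $\cS = \{S_1,\ldots,S_l\}$ be the maximal ST-sets of $Cl(\cT)$; by Corollary~4 of \cite{elusiveness} (already invoked above) these are pairwise disjoint and $l \leq n$. First I would dispose of the trivial case: a singleton ST-set $\{x\}$ is automatically below a cut-edge, namely the pendant edge feeding the leaf labelled $x$, so only the non-singleton maximal ST-sets require attention. Without loss of generality assume $S_1,\ldots,S_k$ (with $k \leq l$) are exactly the non-singleton maximal ST-sets.

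The core is a short induction. Set $N_0 := N$, and maintain the invariant that $N_j$ displays all trees in $\cT$, that $r(N_j) \leq r(N)$, and that $S_1,\ldots,S_j$ are all below cut-edges in $N_j$. Given $N_{j-1}$ satisfying the invariant, I would apply Lemma~\ref{lem:moveST} with the ST-set $S_j$ to obtain a network $N_j$ displaying all trees in $\cT$ with $r(N_j) \leq r(N_{j-1}) \leq r(N)$ in which $S_j$ is below a cut-edge. It then remains to check that $S_1,\ldots,S_{j-1}$ stay below cut-edges. For each $i < j$, the set $S_i$ is an ST-set with respect to $Cl(\cT)$, it is disjoint from $S_j$ (distinct maximal ST-sets are disjoint), and it is below a cut-edge in $N_{j-1}$ by the inductive hypothesis; hence the preservation clause of Lemma~\ref{lem:moveST} guarantees that $S_i$ remains below a cut-edge in $N_j$. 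This re-establishes the invariant, and after $k$ steps the network $N' := N_k$ has every maximal ST-set below a cut-edge while $r(N') \leq r(N)$.

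I do not expect a genuine obstacle here: the whole argument is bookkeeping on top of Lemma~\ref{lem:moveST}, and the one point that truly needs the hypotheses of that lemma is the preservation clause, which applies only to ST-sets disjoint from the one being moved. This is exactly why the statement \emph{stems from the fact that maximal ST-sets are disjoint}: disjointness is precisely what lets the single-set move of Lemma~\ref{lem:moveST} be iterated into a simultaneous placement of all maximal ST-sets, without any later move undoing an earlier one. The only mild care required is to confirm that ``below a cut-edge'' is preserved in exactly the sense used by Lemma~\ref{lem:moveST}, which is immediate from how its preservation clause is phrased.
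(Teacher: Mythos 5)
Your proof is correct and is exactly the argument the paper intends: the paper's entire justification is the remark that the corollary ``stems from the fact that maximal ST-sets are disjoint,'' i.e.\ iterate Lemma~\ref{lem:moveST} over the (non-singleton) maximal ST-sets, with disjointness activating the preservation clause so that earlier placements survive. Your write-up simply makes that one-line induction explicit, including the correct observation that singleton ST-sets need no treatment (note Lemma~\ref{lem:moveST} is stated only for non-singleton $S$, so your case split is actually required, not just convenient).
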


\begin{lemma}\label{lem:hybridNumInvariant}
Let $\cT$ be a set of trees on $\cX$, and let $\cT'$ be the set of trees obtained by ST-collapsing $\cT$. Then $\stevenc{r(\cT')} = r(\cT)$ \stevenc{and optimal solutions for $\cT'$ can be converted into optimal solutions for $\cT$ in time
$poly(m)$}.
\end{lemma}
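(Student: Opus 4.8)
The plan is to prove the equality $r(\cT') = r(\cT)$ by establishing the two inequalities separately; the $poly(m)$-time conversion of optimal solutions then falls out of one of the two directions. Throughout, let $\cS = \{S_1, \dots, S_l\}$ denote the maximal ST-sets of $Cl(\cT)$ and let $s_1, \dots, s_l$ be the corresponding dummy taxa introduced by ST-collapsing.

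First I would handle $r(\cT) \leq r(\cT')$, which is essentially immediate from Lemma \ref{lem:stcollapse1}. Take an optimal network $N'$ for the collapsed instance, so $N'$ displays all the trees in $\cT'$ and $r(N') = r(\cT')$. Lemma \ref{lem:stcollapse1} converts $N'$, in $poly(m)$ time, into a network $N$ that displays all the trees in $\cT$ with $r(N) = r(N')$. Hence $r(\cT) \leq r(N) = r(\cT')$. Note that once the reverse inequality is known, this same conversion takes an optimal solution for $\cT'$ to a network of reticulation number $r(\cT') = r(\cT)$, i.e. to an optimal solution for $\cT$; so the algorithmic part of the statement is delivered by exactly this construction and needs no extra work.

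The real content is the reverse inequality $r(\cT') \leq r(\cT)$. Here I would start from an optimal network $N$ for $\cT$ and apply Corollary \ref{lem:stcollapse2} to obtain a network $\hat{N}$ that still displays every tree in $\cT$, satisfies $r(\hat{N}) \leq r(N) = r(\cT)$, and has every $S_j$ sitting below a cut-edge $(u_j, v_j)$ (so that the leaves reachable from $v_j$ are exactly $S_j$). Because maximal ST-sets are pairwise disjoint, these blocks cannot be nested, and since each $(u_j,v_j)$ is a cut-edge the pendant subnetwork rooted at $v_j$ is a self-contained part of $\hat{N}$: every reticulation below $v_j$ has both of its parents below $v_j$ as well. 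I would then form a network by deleting, for each $j$, everything strictly below $v_j$ and turning $v_j$ into a single leaf labelled $s_j$ (suppressing any resulting degree-two node). Deleting a self-contained pendant block only removes reticulations, so the resulting network $N''$ obeys $r(N'') \leq r(\hat{N}) \leq r(\cT)$.

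It then remains to verify that $N''$ displays every tree in $\cT'$, and this is the step I expect to be the main obstacle. Fix $T \in \cT$ with collapsed counterpart $T' \in \cT'$, and let $T_{\hat{N}}$ be a switching of $\hat{N}$ corresponding to $T$. Since the leaves below $v_j$ in $\hat{N}$ are exactly $S_j$, the same holds in the switching $T_{\hat{N}}$, so $v_j$ is the image of $\text{lca}(S_j)$. Restricting $T_{\hat{N}}$ to the nodes that survive the block-collapse and hanging the leaf $s_j$ at $v_j$ yields a switching of $N''$; applying the same edge contractions that turned $T_{\hat{N}}$ into $T$, but now ignoring the edges interior to the deleted blocks, should produce exactly $T'$. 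The care needed is to confirm that whatever happened to the edge $(u_j,v_j)$ when forming $T$ (whether it was contracted, so that $S_j$ is not a clade of $T$, or not) places $s_j$ at precisely $\text{lca}_{T}(S_j)$ in $T'$, and that no cluster of $T'$ is lost. The disjointness of the $S_j$ together with the cut-edge property makes the blocks independent, so this verification can be carried out one block at a time. This establishes $r(\cT') \leq r(N'') \leq r(\cT)$; combined with the first direction we obtain $r(\cT') = r(\cT)$, and the conversion is the one described above.
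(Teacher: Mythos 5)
Your proposal is correct and follows essentially the same route as the paper: the direction $r(\cT) \leq r(\cT')$ (and the $poly(m)$ conversion) comes from Lemma~\ref{lem:stcollapse1}, and the direction $r(\cT') \leq r(\cT)$ comes from Corollary~\ref{lem:stcollapse2}. The paper's own proof is just these two citations, leaving implicit the step you spell out explicitly (collapsing each maximal ST-set sitting below a cut-edge into a dummy leaf and checking that the resulting network displays $\cT'$ without increasing the reticulation number), so your write-up is, if anything, more complete than the original.
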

\begin{proof}
This easily follows from Lemma \ref{lem:stcollapse1} and Corollary \ref{lem:stcollapse2}. Indeed, the former result ensures that $r(\cT) \leq r(\cT')$ while the latter that $r(\cT) \geq r(\cT')$. 
\end{proof}

\stevenc{Lemma \ref{lem:hybridNumInvariant} shows that, without loss of generality, we can restrict our attention
to ST-collapsed sets of trees.} The next lemma ensures that we can \stevenc{consequently} narrow our search to the networks in $\mathcal{\hat{N}}^{r(\mathcal{T})}_{\cX}$.
\begin{lemma}
\label{lem:genIsSufficient_bis}
Let $\cT$ be an ST-collapsed set of trees on $\cX$, such that $r(\cT) \geq 1$. Then there exists a network $N$ in $\mathcal{\hat{N}}^{r(\mathcal{T})}_{\cX}$ such that
$N$ displays all the trees in $\cT$. 
\end{lemma}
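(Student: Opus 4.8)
The plan is to prove this by taking an optimal network $N$ for $\cT$, so $r(N) = r(\cT)$, and showing that it can be massaged into a network that is a completion of some $r(\cT)$-reticulation generator, i.e. lives in $\mathcal{\hat{N}}^{r(\mathcal{T})}_{\cX}$. By the remarks in the Preliminaries we may assume $N$ is binary, so every reticulation has indegree $2$ and outdegree $1$ and every other internal node has outdegree $2$. The key structural idea is that a generator is essentially the ``topological skeleton'' of a binary network once all the pendant structure hanging off the reticulation part is stripped away: one contracts the maximal pendant subtrees (the tree-like portions dangling off the biconnected/reticulation core) down to single edges or single leaves, leaving a multigraph whose only branching is caused by reticulations.

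First I would make precise the operation of extracting a generator from a binary network. The reticulation nodes of $N$ number exactly $r(\cT)$ (since $N$ is binary, the reticulation number equals the count of reticulations). I would identify the subgraph of $N$ obtained by repeatedly deleting leaves and suppressing degree-two vertices until only the ``core'' remains; the resulting object, after also collapsing the pendant trees hanging on each edge, is a directed acyclic multigraph with one indegree-$0$/outdegree-$1$ root, exactly $r(\cT)$ reticulations, and all other nodes of indegree $1$ and outdegree $2$ — precisely an $r(\cT)$-reticulation generator $G$. The taxa of $N$ then sit either along edges of $G$ (forming the edge sides) or below the reticulations (the node sides), so $N$ is recovered as a completion of $G$ by adding $\cX$ to the sides of $G$ exactly as in the definition of $\mathcal{\hat{N}}^{r}_{\cX}$.

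The main obstacle, and where the ST-collapsed hypothesis does the real work, is ruling out degeneracies that would prevent $N$ from being a genuine completion in the sense required. Two things can go wrong. First, a generator in the definition permits at most one taxon per node side but allows several taxa strung along an edge side; conversely a pendant subtree hanging off a single side could contain more than one taxon arranged in a nontrivial (non-path) way, which is not what ``adding taxa to a side'' produces. I would argue that any nontrivial pendant subtree below a cut-edge of $N$ corresponds to a set of taxa that forms an ST-set with respect to $Cl(\cT)$ of size at least two; since $\cT$ is ST-collapsed, every maximal ST-set has size $1$, so no such pendant subtree of size greater than one can occur below a cut-edge. This forces the pendant structure to be trivial enough that each side carries taxa in the permitted path-or-single-leaf configuration. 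Second, I must ensure no multi-edge survives in $N$ after adding taxa and that the condition $r(\cT) \geq 1$ (guaranteeing $G$ is nontrivial, i.e. $N$ is not a tree) is used so that the generator genuinely exists; the $r(\cT) \geq 1$ assumption also sidesteps the trivial tree case where no generator is defined.

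The cleanest route is therefore to invoke Corollary~\ref{lem:stcollapse2}: starting from an optimal $N$, replace it by the network guaranteed there in which all maximal ST-sets with respect to $Cl(\cT)$ lie below cut-edges, without increasing $r(N)$. Combined with the ST-collapsed hypothesis (all maximal ST-sets are singletons), this pins down the structure below every cut-edge so tightly that contracting the pendant subtrees yields a bona fide $r(\cT)$-reticulation generator and exhibits $N$ as an element of $\mathcal{\hat{N}}^{r(\mathcal{T})}_{\cX}$. I expect the technical heart of the write-up to be verifying carefully that the stripped skeleton satisfies every clause of the generator definition (unique root of the correct degree, exactly $r(\cT)$ reticulations, all other nodes indegree $1$/outdegree $2$, and that taxa distribute legally across edge sides and node sides), with the ST-collapsed property being exactly the lever that excludes the problematic non-path pendant configurations.
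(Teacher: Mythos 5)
Your proposal is correct and follows essentially the same route as the paper: the paper's proof likewise starts from a binary network $N$ with $r(N)=r(\cT)$, notes that displaying all of $\cT$ implies representing the ST-collapsed cluster set $Cl(\cT)$, and then applies the reverse of the generator-completion transformation, outsourcing the topological verification to Lemma 4 (extended) of \cite{softwiredClusterFPT} --- exactly the stripping argument you reconstruct, with the correct key lever that a pendant subtree with two or more taxa under a cut-edge would constitute an ST-set of size at least two, contradicting ST-collapsedness. One small caveat: your invocation of Corollary~\ref{lem:stcollapse2} is vacuous here, since for an ST-collapsed $\cT$ all maximal ST-sets are singletons and trivially lie below their pendant leaf edges; it is your direct pendant-subtree argument, not that corollary, that carries the proof.
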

\begin{proof}
Let $N$ be any binary network with reticulation number $r(\cT)$ such that
$N$ displays all the trees in $\cT$. 
To prove the result, we need to prove that applying the reverse of the transformation
described in Definition 4 of  \cite{softwiredClusterFPT} to $N$ will give some $r(\cT)$-reticulation
generator $G$.
The proof is the same of that of  Lemma 4 (extended) of \cite{softwiredClusterFPT}. \stevenc{(This follows
because that lemma only considers the topology of $N$ and a sufficient pre-condition for the lemma to hold
is that $N$ represents an ST-collapsed set of clusters, which is certainly the case here because $N$ satisfies
the stronger requirement of displaying all the trees in $\cT$.)}
\end{proof}
\cs{
We call a side of a generator $G$ \emph{long}, \emph{short} or \emph{empty} if it is allowed to receive respectively $\geq$ 2  taxa, 1 taxon or 0 taxa. We call a \emph{set of side guesses} for a generator $G$, denoted by $S_G$, an assignment of  
type (empty, short, or long) to each side of $G$. A completion $N$ of $G$ on $\cX$ \emph{respects} a set of side guesses $S_G$ if 
{long}, {short} or {empty} sides in $S_G$ respectively received $\geq$ 2  taxa, 1 taxon or 0 taxa in $N$. A pair $(G,S_G)$ is said to be \emph{side-minimal} w.r.t. $\cT$ and $r$ if there exists a network $N$ such that (1) $r(N)=r$  (2) $N$ displays all the trees in $\cT$ (3) $N$  is a completion of $G$ on $\cX$ respecting $S_G$  and (4) $N$ has, amongst all binary networks displaying all the trees in $\cT$,  a minimum number of long sides, and (to further break ties) amongst those networks it has a minimum number of short sides. 
We define an \emph{incomplete network} as a generator $G$, a set of side guesses $S_G$, a set of \emph{finished} sides
(i.e. those sides to which we are not allowed to add taxa anymore), a set of
\emph{future} sides (i.e. short and long sides on which  no taxa has been placed yet) and at most one
\emph{active} side (i.e. a long side  not yet declared as finished to which we have already allocated at least one taxon). A  \emph{valid completion} of an incomplete network is an
assignment of the unallocated taxa to the future sides and (possibly) to the active side, that respects $S_G$ and such that the resulting network displays all the trees in $\cT$. 
}
%
\stevenc{We are now ready to give the main lemma of this article. It shows that,
once we have started on a new active side $s$, we know how to correctly continue adding
taxa to it, and when to declare that it is finished.}

\begin{lemma}
\label{lemma:subroutineIsCorrect}
Let $\cT$ be a \cs{well-bounded} ST-collapsed set of trees on $\cX$ and let $r$ be the first integer such that a network \cs{with reticulation number $r$ displaying all the trees in $\cT$ exists}. Let $N$ be an incomplete network such that 
its underlying $r$-reticulation generator $G$ and set of side guesses $S_G$ are such that $(G,S_G)$ is side-minimal w.r.t. \cs{$\cT$} and $r$, and let $s$ be an active side of $N$. 
Then, if a valid completion for $N$ exists, Algorithm \ref{algo:addOnSide}  computes a set of (incomplete) networks $\mathcal{N}$ such that  
 this set contains at least one  network  for which a valid completion exists \cs{in time $f(r) \cdot poly(m)$,  where $m=|\cX|+|\cT|$}. 
\end{lemma}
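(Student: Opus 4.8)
The plan is to mirror the corresponding branching step of \cite{softwiredClusterFPT}, replacing every intermediate cluster-representation test by a tree-display test (justified by Proposition~\ref{prop:checkReprTree}), and to absorb the resulting pathological cases by exhaustive branching whose width is controlled by well-boundedness. Throughout I assume, as given, that a valid completion of the incomplete network $N$ exists, and I fix one such completion $N^{*}$. Since $(G,S_G)$ is side-minimal w.r.t. $\cT$ and $r$, I may take $N^{*}$ to be binary with $r(N^{*})=r$, to be a completion of $G$ respecting $S_G$, and to extend the taxa already allocated to the active side $s$. The goal is then to show that the set $\mathcal{N}$ output by Algorithm~\ref{algo:addOnSide} contains an incomplete network of which a reticulation-number-preserving modification of $N^{*}$ is still a valid completion.

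First I would read off from $N^{*}$ the \emph{correct move} on the side $s$. Let $t^{*}$ be the taxon that $N^{*}$ places on $s$ immediately below the lowest taxon currently allocated to $s$, or let $t^{*}=\bot$ if $N^{*}$ allocates no further taxon to $s$ (so that $s$ should be declared finished). Algorithm~\ref{algo:addOnSide} branches over a candidate set $\mathcal{C}$ consisting of the option ``finish $s$'' together with a bounded set of candidate taxa to append to $s$; the incomplete network produced by each candidate differs from $N$ only in the status of $s$ and in one further allocated taxon. I would then argue that the branch corresponding to $t^{*}$ (resp. to finishing $s$) is present in $\mathcal{N}$, and that $N^{*}$ witnesses a valid completion of that branch, invoking Lemma~\ref{lem:moveST} and Corollary~\ref{lem:stcollapse2} where necessary to re-route disjoint ST-sets so that the finished and future sides of the chosen branch are respected without increasing the reticulation number.

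The heart of the argument is to characterise $\mathcal{C}$ and show that it both has size $f(r)$ and provably contains $t^{*}$. In the cluster model the admissible next taxon on a side is pinned down by the relation $\rightarrow_{Cl(\cT)}$ together with the ST-collapsed assumption (maximal ST-sets are singletons): the candidates are the taxa not forced to lie strictly below some not-yet-placed taxon, and their number is bounded by the number of sides of $G$, which is $O(r)$. The new phenomenon in the tree model is that the stronger topological demands of display can force $t^{*}$ even when no cluster of $Cl(\cT)$ does, and this is precisely the situation handled by well-boundedness. I would carry out a case analysis on why $N^{*}$ places $t^{*}$ where it does: if the position is forced by a cluster of $Cl(\cT)$ it is captured exactly as in \cite{softwiredClusterFPT}; otherwise it is forced only by the finer topology of some tree, and here I enumerate the finitely many topological ``reasons''. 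If condition~(1) of well-boundedness holds there are at most $f(r)$ trees, so only $f(r)$ such reasons arise; if condition~(2) holds every polytomy has at most $f(r)$ children, so resolving it contributes only $f(r)$ local orderings. Either way $|\mathcal{C}| \le f(r)$ and a representative of the true move lies in $\mathcal{C}$.

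For the running time, computing $Cl(\cT)$, the relation $\rightarrow_{Cl(\cT)}$ and the maximal ST-sets costs $poly(m)$ (using $|Cl(\cT)| \le |\cT|\cdot 2(n-1) = poly(m)$ and the $poly$ bound of \cite{elusiveness}); forming each of the $f(r)$ candidate incomplete networks is cheap; and wherever the algorithm prunes a candidate it does so by a display test that, by Proposition~\ref{prop:checkReprTree}, costs $f(r)\cdot poly(m)$. Hence a single invocation runs in $f(r)\cdot poly(m)$, as claimed. The \textbf{main obstacle} I anticipate is exactly the case analysis of the previous paragraph: proving that the bounded candidate set really does contain $t^{*}$ in every pathological configuration, i.e. that the topological reasons forcing $t^{*}$ are exhaustively enumerated by well-boundedness, and that committing to $t^{*}$ never destroys validity once the disjoint ST-sets are re-routed via Lemma~\ref{lem:moveST}.
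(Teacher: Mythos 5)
Your high-level framing (replace intermediate cluster-representation tests by display tests, fix a valid completion $N^{*}$ and track the ``correct move'' $t^{*}$) matches the spirit of the paper's proof, but you have misidentified where the difficulty lies, and this leaves a genuine gap. In Algorithm~\ref{algo:addOnSide} there is no $f(r)$-size candidate set $\mathcal{C}$ of taxa to branch over: by the time the modified lines 15/53 are reached, the cluster machinery inherited from \cite{softwiredClusterFPT} has already forced a \emph{unique} candidate taxon $l$ (we are in the subcase $|L'|=1$, $B(l)=\emptyset$), and the only freedom is a two-way choice (place $l$ just above $x_i$, or declare $s$ finished), plus a third option ``put $l$ somewhere in $U$'' when $U\neq\emptyset$. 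So bounding the per-step branching width by $f(r)$ is both not what the algorithm does and not what the running time needs: a long side can carry $O(|\cX|)$ taxa, and even a two-way branch at each placement yields a search tree with $2^{O(|\cX|)}$ leaves, which is not $f(r)\cdot poly(m)$. What is actually required --- and what your proposal does not supply --- is an \emph{amortized} bound on the total number of branchings incurred over the entire construction of the side $s$.

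The paper obtains this through the safe/unsafe dichotomy that your ``finitely many topological reasons'' gestures at but never pins down. A tree $T$ is unsafe w.r.t.\ $s$ when it is in Case B (the parent of $l$ is the grandparent of $x_i$) and some sibling subtree $W_j$ of $x_i$ has all its taxa strictly below $s$; two facts must then be established. First, when \emph{all} trees are safe it is correct to return only $N(l,s)$: this is an exchange argument (take any valid completion $N'$, move $l$ to just above $x_i$, and show the modified switching still corresponds to a binary refinement of each $T$, the key point being that for safe trees the parent of $x_i$ is the lca of $\cW\cup\{x_i\}$ in any such switching), and it is here that side-minimality of $(G,S_G)$ is invoked to keep the side guesses intact --- your appeal to Lemma~\ref{lem:moveST} and Corollary~\ref{lem:stcollapse2} is misplaced, as those results serve the ST-collapsing preprocessing, not this lemma. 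Second, the amortization itself: each tree can be unsafe w.r.t.\ $s$ \emph{at most once} (proved via lca-incomparability of the sibling subtrees of $x_i$ in any binary refinement displayed by a completion), so under condition (1) at most $|\cT|\le f(r)$ branchings occur on the side; and under condition (2), once $d+1$ taxa sit below $x_i$ on $s$ they would have to lie in pairwise distinct sibling subtrees of $x_i$, impossible with degree at most $d$, so all trees are permanently safe thereafter. Your claim that a polytomy with $f(r)$ children ``contributes only $f(r)$ local orderings'' is incorrect (the number of resolutions of a $d$-ary polytomy is super-exponential in $d$) and is in any case not the operative mechanism; without the unsafe-at-most-once and $(d+1)$-saturation arguments, neither well-boundedness condition yields the claimed $f(r)\cdot poly(m)$ bound.
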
 


\begin{algorithm}[] 
\nlset{15} if $N(l, s)$ does not display each tree in $\cT$ restricted to $\cX(N)  \cup \{l\}$ then declare $s$ as finished and return $N$. If $x_i$ is currently
the only taxon on side $s$ then return $N(l,s)$. If all trees are \emph{safe} \stevenc{w.r.t. $s$} then return
$N(l,s)$. Otherwise, guess i.e. return both $N(l,s)$ and $N$ where $s$ is declared as finished.

\cs{\nlset{53}} 
if $N^{*}(l, s)$ does not display each tree in $\cT$ restricted to $\cX(N)  \cup \{l\}$ then
guess either to end the side or to put $l$ somewhere in $U$. Otherwise, if all trees are \emph{safe} \stevenc{w.r.t. $s$} then
return $N(l,s)$. Otherwise (i.e. at least one tree is unsafe \stevenc{w.r.t. $s$}) guess \stevenc{between} (1) end\stevenc{ing} the side, (2) \stevenc{returning $N(l,s)$} or (3) \stevenc{putting} $l$ somewhere in $U$.
\caption{
addOnSide*($N,s$)\label{algo:addOnSide}
}
\end{algorithm}
\begin{proof}
Note that Algorithm \ref{algo:addOnSide}, i.e. Algorithm addOnSide*, coincides with Algorithm 1 of \cite{softwiredClusterFPT} but for line \stevenc{15} and \stevenc{line 53}. Thus, we only detail the modified lines in the pseudocode.
We stress here that reading \cite{softwiredClusterFPT} is a prerequisite for the comprehension of this \stevenc{and
subsequent lemmas}.

Most of the proof of this lemma coincides with the proof of Lemma 3 of \cite{softwiredClusterFPT}.  
\stevenc{This is true because a network displaying a set of trees $\cT$ always represents the set of clusters $Cl(\cT)$. (Unfortunately the opposite direction is not always true \cite{twotrees}). Hence, whenever the original algorithm rejects a candidate solution because it does (or will not be able to) represent the clusters $Cl(\cT)$, we may
conclude that this candidate solution certainly did not (or will not be able to) display all the trees in $\cT$, and thus also reject it from consideration.}
%
Note that Lemma 3 of \cite{softwiredClusterFPT} has been implicitly extended to also apply to ST-collapsed cluster sets and $r$-reticulation generators in Section 4 of \cite{softwiredClusterFPT}, by extending Propositions 1 and 2, and Observation 5. 

Only two parts of the proof of  Lemma 3 of \cite{softwiredClusterFPT}  do not work for ST-collapsed sets of \emph{trees}. \stevenc{Having modified the original algorithm to obtain Algorithm \ref{algo:addOnSide}, we now detail how to also modify the relevant parts of the original proof}.

\paragraph{\textbf{Case $\boldsymbol{U = \emptyset}.$}} 

The only part of the proof of Lemma 3 of \cite{softwiredClusterFPT} that does not hold here
concerns the situation encountered when the original line 15 is reached:  $|L'| = 1$, $B(l) = \emptyset$, and  ${N(l,s)}$   \emph{does} \stevenc{represent $\cC$} restricted to $\cX(N)  \cup \{l\}$. At this point the new line 15 applies.  

It has been proven in Lemma 3 of \cite{softwiredClusterFPT} that in any valid completion of $N$  there can be no taxon $l'\neq l$ directly above $x_i$ on $s$. So all valid completions terminate the side at $x_i$ or have $l$ above $x_i$.
%
We will perform several tests, one at a time, and we only resort to returning
both solutions (i.e. branching) if the final part of line 15 is reached.

The first test is: \emph{if $N(l, s)$ does not display each tree in $\cT$ restricted to $\cX(N)  \cup \{l\}$ then declare $s$ as finished and return $N$} (see Algorithm 1, line 15, first sentence). If $N(l,s)$  does not display each tree in $\cT$ restricted to $\cX(N)  \cup \{l\}$ then it is not possible to construct a valid completion from $N(l,s)$. Hence the only option that remains in this case is to terminate the side.

 The second test \stevenc{(Algorithm 1, line 15, second sentence)} is simple. \emph{If $x_i$ is currently the only taxon on side $s$, then
we only have to return $N(l,s)$}. This is correct because of the assumption that $s$ is a long side (i.e. has at least two taxa)
combined with the earlier observation that the only taxon that can appear above $x_i$ on side
$s$ (if any) is $l$. We may thus henceforth assume that there is at least one taxon underneath $x_i$ on side $s$. Before discussing the third test, we need some observations about the relative location of $x_i$ and $l$ in the input trees.
\begin{figure}[h]
\centering
\includegraphics[scale=.3]{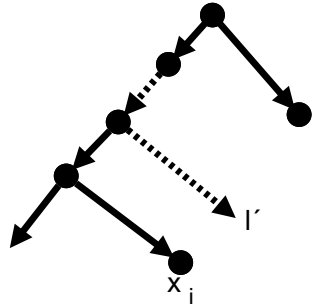}
\caption{If the unique directed path from the parent of $l$ to the parent of $x_i$ contains an interior node then taxon $l'$ must exist. But $N(l,s)$ cannot possibly display such a tree.}
\label{fig:NoInterior}
\end{figure}

Let $T$ be a tree of $\cT$. We may assume that there is a directed path in $T$ (possibly of length 0) from the
parent of $l$ to the parent of $x_i$. Indeed, if this was not the case we would have that either the parent of $x_i$ is an ancestor of  the parent of $l$ or these two parent nodes are not comparable \stevenc{i.e. neither is related to the other by the ancestor-descendant relation}. In both cases $T$ would contain a non-singleton cluster containing $l$ but not $x_i$, but this is impossible since $l \rightarrow_{Cl(\cT)}  x_i$. Moreover,  the directed path from the parent of $l$ to the parent of $x_i$ cannot contain any interior nodes. To see this, suppose there was some interior node on this path; this would imply the existence of a taxon $l'$ lying
``strictly between'' $x_i$ and $l$ (see Figure \ref{fig:NoInterior}). \cs{Moreover, since $B(l) = \emptyset$, we have that $l'$ is in $\cX(N)$.} But then $T$ cannot possibly be displayed by $N(l,s)$ because $N(l,s)$ leaves no space for $l'$ to be in the correct position. This contradicts the assumption
that $N(l,s)$ displays all the trees in $\cT$ restricted to $\cX(N)  \cup \{l\}$.
Combining these insights we see that there are only two possible configurations for $T$, Cases A and B, depicted in Figure \ref{fig:figCasesAB}.

\begin{figure}[h]
\centering
\includegraphics[scale=.3]{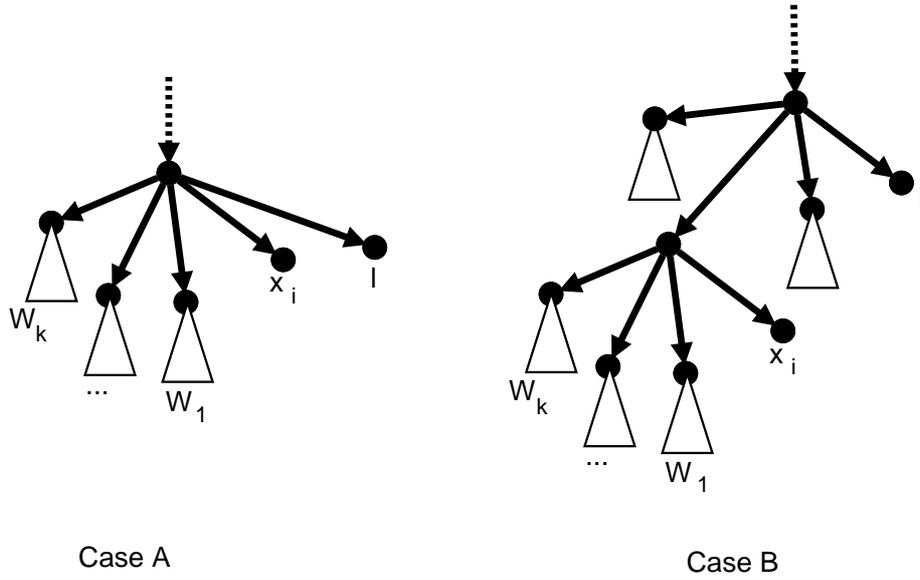}
\caption{Case A is when $l$ and $x$ have the same parent, Case B is when the parent of $l$ is the grandparent of $x_i$.
}
\label{fig:figCasesAB}
\end{figure}

Irrespective of whether the tree $T$ is in Case A or Case B
we require the following definitions. Let $p$ be the parent of $x_i$. Let $u_1, \ldots, u_k$
$(k \geq 1)$ be the children of $p$ not equal to $x_i$. (Note that $k=1$ if $T$ is binary). For each $u_j$ let $W_j$ be the subtree of $T$ rooted at $u_j$ (i.e. \cs{a} sibling subtree of $x_i$), and let $\cX(W_j)$ be the set of taxa in $W_j$. 
Let $\cW$ be the union of all the $\cX(W_j)$. Observe that in Case B, $\cW \subseteq \cX(N)$ i.e. all taxa in $\cW$ have already been allocated. (If this was not so, $B(l) \neq \emptyset$ and we would not be in this case anyway). Observe also that in Case B all the taxa in $\cW$ are
reachable in $N$ by directed paths from the parent of $x_i$. If this was not so, then
$N(l,s)$ would not have displayed all the trees in $\cT$ restricted to $\cX(N) \cup \{l\}$ (specifically: $T$) and
we would not be in this case anyway.

We say a tree $T \in \cT$ is  \emph{safe} \cs{w.r.t. $s$}
if (i) it is in Case A or (ii) it is in Case B and for each of its $W_j$, at least one taxon from $\cX(W_j)$ has already been allocated to side $s$.
If a
tree is not safe \stevenc{w.r.t. $s$} then it is \emph{unsafe} \cs{w.r.t. $s$}: it is in Case B and there exists at least one $W_j$ such that none of the taxa in $\cX(W_j)$ have been allocated to side $s$. (Combining this fact with
the earlier observations that all the taxa in the sibling subtrees of $x_i$ have already been allocated and are reachable by directed paths from the parent of $x_i$, we note that \stevenc{in this unsafe situation} all the taxa in $W_j$ must
have been allocated to sides 
\cs{reachable from} \stevenc{i.e. ``underneath''}
side $s$).

 The third test \stevenc{(Algorithm 1, line 15, third sentence)} is this:
\emph{if all trees in $\cT$ are safe \cs{w.r.t. $s$}, then return $N(l,s)$}.
We now argue that this is correct.
Suppose, for the sake of contradiction, that all valid completions
terminate the side at $x_i$. (As mentioned earlier it is not possible that a taxon $l'$ other than $l$ is placed immediately above $x_i$). Let $N'$ be an arbitrary valid completion of $N$. Note that, by definition,
$N'$ has the same set of side guesses as $N$. Let  $T_{N'}$ be the switching of $N'$ \stevenc{corresponding to} a binary refinement of $T$. Denote by $N''$ and $T_{N''}$ respectively the network and  the switching obtained respectively from $N'$ and   $T_{N'}$ by moving $l$, wherever it is, onto the side $s$,  just above $x_i$. We claim that $N''$ displays all the trees in $\cT$.
It is not too difficult to see that, if $T$ is in Case A, $T_{N''}$ still \stevenc{corresponds to} a binary refinement of $T$. This also holds if $T$ is in Case B and
for each $W_j$ at least one taxon from $\cX(W_j)$ is on side $s$; the central argument for this
is that in any switching  in a valid completion \stevenc{that corresponds to a} binary refinement of $T$, the parent of $x_i$
will always be the lowest common ancestor of $\cW \cup \{x_i\}$. Furthermore, we can argue as in \cite{softwiredClusterFPT} that because of the assumed minimality
of the side guesses, $N''$ has the same set of side guesses as $N$ and $N'$, see \cite{softwiredClusterFPT} for the full argument. Hence we can conclude that $N''$ is a valid completion, yielding a contradiction. So if all trees in $\cT$ are safe \stevenc{w.r.t. $s$}, then returning $N(l,s)$ is correct. 

If we have reached this point then at least one tree in $\cT$ is unsafe \cs{w.r.t. $s$}. The problem we
face here is that for an unsafe tree $T$ it might hold that in all switchings \stevenc{corresponding to a binary refinement} of $T$, ranging over all valid completions, the lowest common ancestor
of $\cW \cup \{x_i\}$ lies \emph{above} side $s$. In such a switching moving $l$ directly
above $x_i$ creates a \stevenc{switching that does not correspond to a binary refinement of $T$}, because $l$ will wrongly have been
put ``inbetween'' the $W_j$. Hence, we cannot be certain that only returning $N(l,s)$
is correct, because this might make it impossible to reach any valid completions.

Hence, we \textbf{guess} \stevenc{(Algorithm 1, line 15, fourth sentence)}. \emph{That is, we return both $N(l,s)$ and $N$ with the side terminated just above
$x_i$}. This is correct (because these are the only two possibilities) but it causes the
search tree to branch. \stevenc{Unfortunately,} in a valid completion
there might be $O(|\cX|)$ taxa on side $s$, and in the worst case we might have to
branch (because some tree is unsafe) for each taxon as it is placed on side $s$. This
could inflate the running time by a factor of $2^{O(|\cX|)}$, which is in \stevenc{general} not
$f(r).poly(m)$\footnote{\stevenc{Note, however, that if $|\cT|$ is exponentially large as a function of $|\cX|$, $2^{O(|\cX|)}$ becomes $poly(m)$, meaning that in such cases $f(r).poly(m)$ running time might still be possible.}}. However, under the assumption of well-boundedness we can
guarantee \cs{a} $f(r).poly(m)$  \cs{running time}, as we will now show. 
The key to proving this lies in proving two observations:\\\\
\cs{\textbf{Observation 1}}. \emph{During the execution of Algorithm 2 \stevenc {(which repeatedly calls Algorithm 1)} a tree $T$ can be unsafe \cs{w.r.t. $s$} \emph{at most once}. Specifically, all trees that are unsafe at
the moment we branch, will never be unsafe again \cs{w.r.t. $s$}.
Hence, each time we branch at least one tree in $\cT$ will become safe \cs{w.r.t. $s$} for the remainder of the execution. If $|\cT|$ is bounded by $f(r)$ (condition (1) of well-boundedness) then the inflation in the running time caused by branching will thus be limited to $2^{f(r)}$, which is itself $f(r)$; after this point Algorithm 1 will never branch again.}
\\\\
\begin{figure}[h]
\centering
\includegraphics[scale=.3]{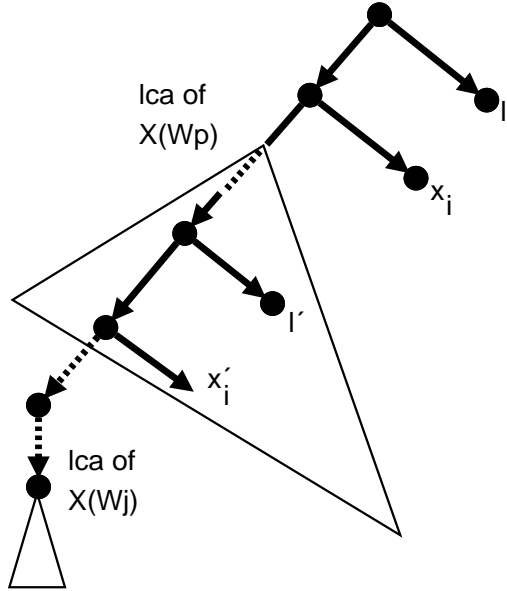}
\caption{If a tree is unsafe  \cs{w.r.t. $s$} a second time, and the directed path from the parent of  $x_i$ (in $T$) to the parent of
$l'$ has length greater than zero,  then $x'_i$ and $l'$ belong to the same $W_p \neq W_j$, such that the lca
of $\cX(W_p)$ (in any \stevenc{binary refinement} of $T$ displayed by the network) is an ancestor of the lca of $\cX(W_j)$, contradiction.}
\label{fig:safe1}
\end{figure}
We now prove this observation. Suppose
that it does not hold, and that some tree $T$ becomes unsafe \cs{w.r.t. $s$} a second time. Let $x'_i$ and
$l'$ be the corresponding taxa the first time $T$ was unsafe  \cs{w.r.t. $s$} and let $x_i$ and $l$ be
the taxa, and $N$ the incomplete network, at the second moment of unsafeness, \cs{see Figure \ref{fig:safe1}}. Clearly, at the
second point of unsafeness $x'_i, l'$ and $x_i$ are already on side $s$, in that
relative order. (We include the possibility that $l' = x_i$).
Note that in $T$ the parent
of $l$ is a strict ancestor of the parent of $x_i$, 
and the parent of $l'$ is a strict ancestor of the parent of $x'_i$. (This follows
because unsafeness implies Case B). Furthermore, the
fact that $N(l,s)$ displays all trees restricted to $\cX(N) \cup \{l\}$ means that in $T$ there
is a directed path (possibly of length 0) from the parent of $x_i$ to the parent of $l'$.
Whichever holds, the parents of $l$ and $x_i$ are strict ancestors of the parent of $x'_i$ in $T$.
Now, given that $T$ is unsafe  \cs{w.r.t. $s$} for a second time, at least one of the $W_j$
corresponding to $x_i$ (i.e. the sibling subtrees of $x_i$ in $T$) is such that all the
taxa in $\cX(W_j)$ lie strictly underneath side $s$. Also, we know that $N(l,s)$ displays
$T$ restricted to $\cX(N) \cup \{l\}$. Consider any \stevenc{binary refinement} of $T$ restricted
to $\cX(N) \cup \{l\}$ displayed
by $N(l,s)$, and observe that in such a refinement the parent of $x'_i$ lies on
the directed path from the parent of $x_i$ to the lowest common ancestor of $\cX(W_j)$.
(This directed path must exist because of the location of $l$ just above $x_i$ in $N(l,s)$).
Clearly, neither $x_i$ nor $l$ is in $\cX(W_j)$.
Recall that (by the definition of Case B) in $T$ the grandparent of $x'_i$ is the same node as the parent of $l'$. So, if the directed path from the parent of $x_i$ (in $T$) to the parent of
$l'$ has length greater than zero, then $x'_i$ and $l'$ belong to the same $W_p \neq W_j$
of $x_i$. See Figure \ref{fig:safe1}.

 But this cannot be so, because in the \stevenc{binary refinement} the lowest common ancestor
of $W_j$ would be an ancestor of the lowest common ancestor of $W_p$, which is not
allowed. (It is not allowed because, whichever binary refinement we choose, all the sibling subtrees of $x_i$ should be incomparable). So suppose  that the parent of $l'$ is the same as the parent of $x_i$ in $T$. In this case  $l'$ and $x'_i$ cannot belong to the same \stevenc{sibling subtree} of $x_i$, because (by Case B)
the parent of $x'_i$ in $T$ is not the parent of $l'$.
Suppose $x'_i$ is in $\cs{W_q \neq W_j}$. Then $|\cX(W_q)| \geq 2$, \cs{otherwise it would not be possible for the pair $x'_i,l'$ to cause the unsafeness of  side $s$ in the previous iteration}. But then we have that, in the \stevenc{refinement}, the lowest common
ancestor of $W_q$ lies on the edge-side \stevenc{$s$} itself i.e. is an ancestor of the lowest common ancestor of $W_j$ (see Figure \ref{fig:safe2}), which is
again not allowed. (The reason is the same as before: in any binary refinement the lowest common
ancestors of all the \stevenc{sibling subtrees of $x_i$} should be mutually incomparable). From this we conclude that $W_j$ cannot lie entirely underneath
side $s$. Hence some taxon of $W_j$ is already on side $s$. Hence $T$ is safe \cs{w.r.t. $s$}. Hence $T$ could not become unsafe  \cs{w.r.t. $s$} for a second time.

\begin{figure}[h]
\centering
\includegraphics[scale=.3]{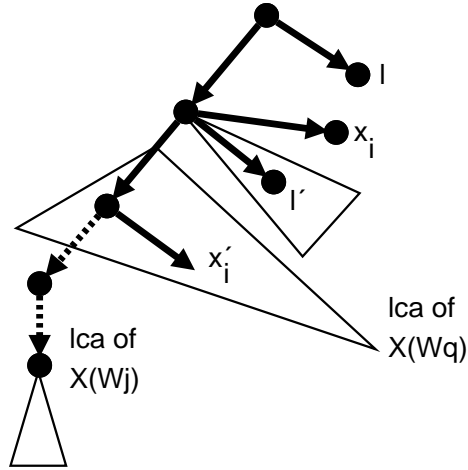}
\caption{If a tree is unsafe  \cs{w.r.t. $s$} a second time, and the directed path from the parent of  $x_i$ (in $T$) to the parent of
$l'$ has zero length,  then $x'_i$ is in some $W_q \neq W_j$ with at least one other taxon, meaning that the lca
of $\cX(W_q)$ (in any \stevenc{binary refinement} of $T$ displayed by the network) is an ancestor of the lca of $\cX(W_j)$, contradiction.}
\label{fig:safe2}
\end{figure}

So, we have shown that an $f(r)$ bound on
$|\cT|$ is indeed sufficient to get the running time we need. \cs{We now introduce the second key observation that concerns the
case when node degrees are bounded by $f(r)$: }\\\\
\cs{\textbf{Observation 2}}. \emph{Let $d$ be the maximum node degree ranging over
all nodes in all trees in $\cT$. \cs{Then}, after \stevenc{at most} $d+1$ \cs{branchings}, all trees in $\cT$ will have become safe  \cs{w.r.t. $s$}}\footnote{\stevenc{We note that the definition of $d$, and the choice of $d+1$,  is not particularly well-optimized. In
particular, in the case $U = \emptyset$ branching can only happen if there are already 2 or more taxa allocated to side $s$; the placement of the first two taxa is deterministic (thanks to the hard assumption that $s$ is a long side). An interesting consequence of this is that, if $\cT$ is binary, the $U = \emptyset$ phase of the algorithm is \emph{entirely} deterministic, because after placing the initial two taxa on the side no tree in $\cT$ can subsequently ever be unsafe w.r.t. $s$.}}.\\

 Suppose then that we have already branched $d+1$ times. This means that there are already
$d+1$ taxa at the bottom of side $s$. Now, suppose a tree $T$ is unsafe  \cs{w.r.t. $s$}, so some
$W_j$ lies entirely underneath side $s$. We know that $N(l,s)$ displays $T$ restricted
to $X(N) \cup \{l\}$, from which we can conclude that all the \cs{taxa} on side $s$ below
$x_i$ belong to (possibly different) \stevenc{sibling subtrees} of $x_i$ in $T$. Observe that it cannot happen that two or more
of the taxa below $x_i$ on side $s$ belong to the same \stevenc{sibling subtree $W_p$}. This holds because
it would mean that in any \stevenc{binary refinement} (\stevenc{displayed by} a valid completion of $N(l,s))$
the lowest common ancestor of $W_p$ is an ancestor of $W_j$, and this is not allowed
because they should be incomparable. So the only way $T$ can be unsafe is if every
taxon on $s$ below $x_i$ is in a different \stevenc{sibling subtree}. But, because of the degree bound,
there can only be at most $d$ different \stevenc{sibling subtrees}, so this is not possible. Hence $T$ is safe,
contradiction.

This concludes the proof that the $U = \emptyset$ case terminates after at most
$f(r).poly(m)$ iterations, assuming well-boundedness.


\paragraph{\textbf{Case $\boldsymbol{U \neq \emptyset}.$}}  
\stevenc{With the exception of} the subcase encountered when the original line 53 is reached - i.e. when 
simultaneously $|L'|=1$, $B(l)=\emptyset$, $N(l,s)$  \emph{does} represent ${\cC}$ restricted to $\cX(N) \cup \{l\}$ and $N^{*}(l,s)$  does represent $\cC^{*}$ - all cases can be proven as argued in Lemma 3 of \cite{softwiredClusterFPT}. \stevenc{In this subcase
the new line 53 of Algorithm 1 applies, and we now prove its correctness.}

In this remaining subcase there are (at most) three possibilities. (1) The
side terminates\footnote{\cs{Recall that}, if $x_i$ is the only taxon currently allocated to side $s$ then this possibility is
excluded, because it violates the assumption that $s$ is a long side.} at $x_i$;\ (2) $l$ is the taxon immediately above $x_i$ on side $s$; (3) $l$
is on some side in $U$. Observe that this really covers all cases. If the side does not
terminate at $x_i$, $l$ is not in $U$ and $l$ is not immediately above $x_i$, then some
not yet allocated  taxon $p \neq l$ must be immediately above $x_i$. But then we would have $p \stevenc{\rightarrow_{Cl(\cT)}} x_i$ (so $p \in L$) and $l \stevenc{\rightarrow_{Cl(\cT)}} p$, so $l \not \in L'$, contradiction. 

We now prove that the sequence of steps shown in the new line 53 is correct.
Firstly, suppose $N^{*}(l, s)$ does \emph{not} display each tree in $\cT$ restricted to $\cX(N)  \cup \{l\}$. Then (2) is excluded as a possibility. So in this case we guess (1) or (3) i.e. guess either to end the side or to put $l$ somewhere in $U$. Note that each time (1) or (3) is guessed an $f(r)$-counter is decremented. This is because
the number of sides is $f(r)$-bounded \cs{and  either $s$ is declared as finished or a short side of $U$ is filled}.

We may henceforth assume that
$N^{*}(l, s)$ \emph{does} display each tree in $\cT$ restricted to $\cX(N)  \cup \{l\}$. We
observe that if we have reached this point then every tree in $\cT$ will be in Case A or Case B; the proof of this
given in the case $U = \emptyset$ goes through here too. (The two comments about trees in Case B that follow
this proof also still hold). The notion of \emph{safe} and \emph{unsafe} is hence still well-defined. In fact, the
proof that - when all trees are safe - it is legitimate to simply return $N(l,s)$ also holds. So the only situation left to
consider is when at least one tree is unsafe. As argued
above there are only three possibilities for action and in line 53 we consider all of them. From this we conclude that
the algorithm is correct. However, it is still necessary to bound the running time.

The only ``dangerous'' guess is (2) because unlike (1) and (\cs{3}) this does not obviously decrement any $f(r)$-bounded counters. To show that this 
\cs{still leads us to a  $f(r) \cdot poly(m)$ running time} 
we will prove that it can happen at most once that
a tree is unsafe \emph{and} we subsequently guess (2). The proof is unchanged from $U = \emptyset$. The only difference, and the reason that we emphasize the \emph{and}, is that a tree might
be unsafe but (rather than putting $l$ above $x_i$) we decide to put $l$ in $U$, meaning that the same tree
can still be unsafe again in a later iteration. However, due to the $f(r)$-bound on the number of sides this cannot happen too often (and the case $U = \emptyset$ will be reached).
Combining these insights shows that if there are at most
$f(r)$ trees in $\cT$ then we will reach the last part of line 53 at most $f(r)$ times in total \stevenc{(during the construction
of side $s$)}. Alternatively, if the maximum degree of trees in $\cT$ is bounded by $f(r)$, then (just as in the
case $U = \emptyset$) we can argue that if $d+1$ taxa have already been placed on side $s$, and we have survived the first check in line 53, then \emph{all} the trees in $\cT$ will be safe  \cs{w.r.t. $s$} and
it is fine to only return $N(l,s)$.\\
\\
This concludes the proof of the lemma. 
\end{proof}

\begin{algorithm}[]
$ \mathcal{N} \gets{N}$\;
\While{there exists $N \in \mathcal{N}$ such that $s$ is not finished in $N$}{

		$\mathcal{N} \gets{} \mathcal{N}  \setminus N$\;
		$\mathcal{N} \gets{}$ $\mathcal{N}$ $\cup$ addOnSide*$(N, s)$\;

}
\caption{
completeSide*($N,s$)\label{algo:completeSide}
}
\end{algorithm}

\begin{lemma}
\label{lemma:subroutine2IsCorrect}
\cs{Let $\cT$ be a \cs{well-bounded} ST-collapsed set of trees on $\cX$ and let $r$ be the first integer such that a network with reticulation number $r$ displaying all the trees in $\cT$ exists.
Let $N$ be an incomplete network such that 
its underlying $r$-reticulation generator $G$ and set of side guesses $S_G$ are such that $(G,S_G)$ is side-minimal w.r.t. $\cT$ and $r$, and let $s$ be an active side of $N$. 
Then, if a valid completion for $N$ exists, Algorithm \ref{algo:completeSide}  computes a set of (incomplete) networks $\mathcal{N}$ such that  
 this set contains at least one  network  for which a valid completion exists {\bf for which $s$ is a finished side}  \cs{in time $f(r) \cdot poly(m)$,  where $m=|\cX|+|\cT|$}. 
 }
\end{lemma}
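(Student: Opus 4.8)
The plan is to regard this as lifting Lemma \ref{lemma:subroutineIsCorrect} from a single call of Algorithm \ref{algo:addOnSide} to the repeated calls made by the while loop of Algorithm \ref{algo:completeSide}. Since the hypotheses on $\cT$ and $r$ are identical to those of Lemma \ref{lemma:subroutineIsCorrect}, that lemma is applicable to every individual call. Correctness will then follow from a loop invariant, and the running time will follow almost immediately from the branching analysis (Observations 1 and 2) already carried out inside the proof of Lemma \ref{lemma:subroutineIsCorrect}, which bounds the number of iterations occurring during the construction of side $s$.

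First I would prove the invariant: at the start of every iteration, $\mathcal{N}$ contains at least one incomplete network $N^{*}$ whose generator and side guesses $(G,S_G)$ are still side-minimal w.r.t.\ $\cT$ and $r$, for which $s$ is either active or finished, and for which a valid completion exists. The base case is immediate, since $\mathcal{N}=\{N\}$ and $N$ satisfies all three properties by hypothesis. For the inductive step, observe first that Algorithm \ref{algo:addOnSide} only ever adds taxa to $s$ or declares $s$ finished, so it never alters $G$ or $S_G$; hence every network ever inserted into $\mathcal{N}$ keeps $(G,S_G)$ side-minimal and keeps $s$ active or finished, and in particular ``$s$ is not finished'' is equivalent to ``$s$ is active''. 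Now let $N'$ be the network the loop selects and removes. If $N'\neq N^{*}$ then $N^{*}$ is untouched and the invariant persists. If $N'=N^{*}$, then $s$ is active in $N^{*}$ and a valid completion exists, so Lemma \ref{lemma:subroutineIsCorrect} guarantees that the set returned by addOnSide*$(N^{*},s)$ contains some $N^{**}$ admitting a valid completion; as just noted $(G,S_G)$ is unchanged and $s$ is active or finished in $N^{**}$, so $N^{**}$ re-establishes the invariant. The loop terminates because each call either finishes $s$ on its branch or allocates one more of the $|\cX|$ taxa; at termination every network in $\mathcal{N}$ has $s$ finished, and the invariant then supplies a network with $s$ finished that admits a valid completion, exactly as required.

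For the running time, each individual call of Algorithm \ref{algo:addOnSide} runs in $f(r)\cdot poly(m)$ time by Lemma \ref{lemma:subroutineIsCorrect}, so it suffices to bound the number of calls, i.e.\ the number of nodes of the search tree whose internal branch points are exactly the calls in which addOnSide* returns more than one network. Along any root-to-leaf path the depth is at most $|\cX|+1$, since each call allocates at most one further taxon (onto $s$ or into $U$) before $s$ is eventually finished. The decisive point is that the number of branch points along any path is $f(r)$-bounded: this is exactly the content of Observations 1 and 2 in the proof of Lemma \ref{lemma:subroutineIsCorrect}, together with the fact that the guesses of type (1) and (3) in line 53 each decrement the $f(r)$-bounded side counter. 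Consequently the search tree has at most $f(r)$ leaves and hence at most $f(r)\cdot poly(m)$ nodes, giving at most $f(r)\cdot poly(m)$ calls and an overall running time of $f(r)\cdot poly(m)$. The only genuine obstacle, namely bounding the branching, has thus already been resolved within Lemma \ref{lemma:subroutineIsCorrect}; the remaining work is the invariant bookkeeping above, whose one delicate point is phrasing the invariant so that it survives the loop's freedom to process the networks of $\mathcal{N}$ in an arbitrary order, which the case split $N'=N^{*}$ versus $N'\neq N^{*}$ handles.
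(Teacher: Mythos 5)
Your proof is correct and takes essentially the same route as the paper: the paper's own proof is a one-line deferral, observing that Algorithm \ref{algo:completeSide} differs from Algorithm 2 of \cite{softwiredClusterFPT} only in calling addOnSide*, and that the proof of Lemma 4 of \cite{softwiredClusterFPT} adapts via Lemma \ref{lemma:subroutineIsCorrect} --- which is exactly the per-call correctness, loop invariant, and reuse of the branching bound (Observations 1 and 2, plus the $f(r)$-bounded counters for guesses (1) and (3)) that you spell out explicitly. One harmless slip: addOnSide* can also place a taxon on a short side in $U$, not only ``add taxa to $s$ or declare $s$ finished,'' but this leaves $G$, $S_G$ and the active/finished status of $s$ unchanged, so your invariant survives unaltered (as your own runtime paragraph implicitly acknowledges).
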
 
\begin{proof}
\cs{Algorithm \ref{algo:completeSide} is the same as Algorithm 2 of \cite{softwiredClusterFPT}, but for the fact 
that in the former the subroutine addOnSide* (defined in Algorithm \ref{algo:addOnSide}) is 
called, rather than the subroutine addOnSide (defined in Algorithm 1 of \cite{softwiredClusterFPT}). 
From this observation and by Lemma \ref{lemma:subroutineIsCorrect}, which extends Lemma 3 of \cite{softwiredClusterFPT}, the proof of Lemma 4 of  \cite{softwiredClusterFPT}  
can be easily adapted to prove this lemma. }
\end{proof}

\begin{algorithm}[] 
\nlset{1} {\bf foreach}  $r$-reticulation generator G in increasing side order {\bf do}\\
\nlset{7} $\mathcal{N'}\gets{}$ completeSide*($N(s^{-},s),s$)\\
\nlset{16} {\bf if} there is a network $N' \in \mathcal{N'}$ displaying all the trees in $\cT$ \textbf{then}
\caption{
ComputeNetwork*($\cT$)\label{algo:main}
}
\end{algorithm}

\begin{lemma}
\label{lem:core}
Let $\cT$ be a well-bounded, ST-collapsed set of trees on $\cX$. Then, for every fixed
$r \geq 0$, Algorithm \ref{algo:main} determines  whether a  network $N$ such that 
$r(N)=r$ displaying every tree in $\cT$ exists, and if so, returns it  in time $f(r) \cdot poly(m)$,  where $m=|\cX|+|\cT|$. 
\end{lemma}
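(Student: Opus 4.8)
The plan is to assemble the machinery already in place. Lemma \ref{lem:genIsSufficient_bis} tells us that an optimal solution can be found inside the generator family $\mathcal{\hat{N}}^{r}_{\cX}$; Lemma \ref{lemma:subroutine2IsCorrect} tells us that a single active side can be completed correctly in $f(r)\cdot poly(m)$ time whenever we work from a side-minimal pair; and Proposition \ref{prop:checkReprTree} lets us verify any candidate network in $f(r)\cdot poly(m)$ time. To these I would add the two standard structural facts that the number of $r$-reticulation generators, and the number of sides of each, are bounded by functions of $r$ alone. Soundness is then immediate: every network the algorithm returns is by construction a completion of an $r$-reticulation generator (hence has reticulation number exactly $r$), and line 16 explicitly checks via Proposition \ref{prop:checkReprTree} that it displays all of $\cT$ before returning it, so no false positive can occur irrespective of the internal behaviour of the subroutines.

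For completeness, assume $r(\cT)=r\geq 1$ (the degenerate case $r=0$ is just a test of the compatibility of $\cT$ and is disposed of directly). By Lemma \ref{lem:genIsSufficient_bis} the set of (generator, side-guess) pairs admitting a valid completion is nonempty. Since line 1 enumerates generators, and within each generator its side guesses, in increasing side order, the first such pair the loop meets is a side-minimal pair $(G,S_G)$: any pair genuinely preceding it admits no valid completion (otherwise it would be at least as good lexicographically in number of long sides then short sides, contradicting minimality), and for those preceding pairs line 16 must fail because completeSide* only outputs completions respecting the fixed $S_G$. Starting from the empty incomplete network on $(G,S_G)$ and calling completeSide* once per side, I would show by induction on the number of finished sides that the maintained set $\mathcal{N}'$ always contains a network for which a valid completion exists; the base case is trivial and each step is exactly Lemma \ref{lemma:subroutine2IsCorrect}, whose side-minimality precondition persists because $(G,S_G)$ is fixed and every network produced still respects $S_G$. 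After the $O(r)$ sides are all finished, $\mathcal{N}'$ contains a complete network displaying $\cT$, which line 16 detects.

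The running time follows by multiplying the bounded factors. The outer loop ranges over $f(r)$ generators, each with $3^{O(r)}=f(r)$ possible side guesses. For a fixed pair, each of the $O(r)$ sides is completed by one call to completeSide*, which by Lemma \ref{lemma:subroutine2IsCorrect} runs in $f(r)\cdot poly(m)$ time and, because its internal branching is triggered at most $f(r)$ times, returns at most $2^{f(r)}=f(r)$ incomplete networks. Hence across the $O(r)$ sides the set $\mathcal{N}'$ never exceeds $f(r)^{O(r)}=f(r)$ networks, and each is subjected to a single $f(r)\cdot poly(m)$ verification at line 16. The product of all these factors is again $f(r)\cdot poly(m)$.

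I expect the completeness argument, rather than the timing, to be the main obstacle. The delicate points are verifying that iterating in increasing side order really does deliver a side-minimal pair before any spurious success can occur, and confirming that the side-minimality precondition of Lemma \ref{lemma:subroutine2IsCorrect} is genuinely inherited across successive completeSide* calls. Both rest on the single observation that completeSide* only ever emits completions respecting the frozen side-guess $S_G$, so the witness guaranteeing side-minimality (minimum number of long sides, then of short sides) is never disturbed as the remaining sides are filled in.
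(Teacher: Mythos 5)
Your proposal is correct and takes essentially the same route as the paper: the paper's own proof is a three-line delegation to the proof scheme of Lemma 5 of \cite{softwiredClusterFPT}, invoking Lemma \ref{lem:genIsSufficient_bis} to restrict to $\mathcal{\hat{N}}^{r}_{\cX}$, Lemma \ref{lemma:subroutine2IsCorrect} for side completion, and Proposition \ref{prop:checkReprTree} for the line-16 check --- exactly the three ingredients you use. Your write-up simply makes explicit what the paper leaves implicit by citation (side-minimality of the first successful pair under increasing side order, induction over the $O(r)$ sides with the side-minimality precondition preserved, and the multiplication of $f(r)$-bounded branching factors), all consistent with the cited scheme.
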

\begin{proof}
\cs{Algorithm \ref{algo:main} coincides with Algorithm 3 of \cite{softwiredClusterFPT} but for lines 1, 7 and 16. Thus, we only detail the modified lines in the pseudocode.
Since by Lemma \ref{lem:genIsSufficient_bis}, we can narrow the search to the  set  $\mathcal{\hat{N}}^{r}_{\cX}$, by Lemma \ref{lemma:subroutine2IsCorrect} we can use the same \stevenc{proof scheme} as Lemma 5 of \cite{softwiredClusterFPT} to prove the lemma  - the check of line 16 can be done in time $f(r) \cdot poly(m)$ by Proposition \ref{prop:checkReprTree}.
}
\end{proof}

From Lemmas \ref{lem:stcollapse1}, \ref{lem:hybridNumInvariant}   and  \ref{lem:core}, we \stevenc{finally} conclude the following:

\begin{theorem}
\label{thm:retics}
Let $\cT$ be a  well-bounded set of trees on $\cX$. Then, for every fixed
$r \geq 0$, it is possible to determine in time $f(r)  \cdot poly(m)$, where $m=|\cX|+|\cT|$,  whether a network that displays all the trees in $\cT$ with reticulation number at most
$r$ exists (and if so, to return such a network).
\end{theorem}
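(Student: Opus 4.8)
The plan is to assemble the theorem from the three preparatory results, treating it as a final gluing step; the genuine technical weight already resides in Lemma~\ref{lem:core} and in the branching analysis behind Lemma~\ref{lemma:subroutineIsCorrect}. Given an arbitrary well-bounded set $\cT$, the first step I would take is to ST-collapse $\cT$ into $\cT'$, which the preliminaries guarantee can be done in $poly(m)$ time and which yields a set that is ST-collapsed by construction. Before invoking anything else I would pause to verify that well-boundedness is inherited by $\cT'$: ST-collapsing acts on each tree independently, so $|\cT'| = |\cT|$ and condition~(1) is preserved verbatim, while the reduction operation only prunes and collapses pendant subtrees and suppresses degree-two nodes, which can never raise the degree of any surviving node, so condition~(2) is preserved as well. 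Hence whichever of the two conditions witnessed the well-boundedness of $\cT$ continues to witness it for $\cT'$, and Lemma~\ref{lem:core} is applicable to $\cT'$.

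Next I would reduce the decision question to the collapsed instance. By Lemma~\ref{lem:hybridNumInvariant} we have $r(\cT') = r(\cT)$, so ``$r(\cT) \le r$?'' holds if and only if ``$r(\cT') \le r$?'' holds. To answer the latter I would run Algorithm~\ref{algo:main} on $\cT'$ for each target value $0,1,\dots,r$ in turn, halting at the first value for which it reports success. By Lemma~\ref{lem:core} each individual call decides whether a network of reticulation number exactly that value displaying all of $\cT'$ exists, and does so in $f(r)\cdot poly(m)$ time; since there are at most $r+1$ calls the cumulative cost is still $f(r)\cdot poly(m)$. (The base value $r=0$ is degenerate: it amounts to testing whether the trees admit a common refinement, i.e. whether $Cl(\cT')$ is pairwise compatible, which is a standard polynomial-time check.)

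Finally, in the affirmative case Algorithm~\ref{algo:main} returns a network $N'$ displaying every tree in $\cT'$ with $r(N')\le r$. I would then lift $N'$ back to the original instance using Lemma~\ref{lem:stcollapse1}, equivalently the conversion promised by Lemma~\ref{lem:hybridNumInvariant}, by replacing each collapsed dummy taxon with a suitable binary tree on its ST-set; this produces a network $N$ displaying all of $\cT$ with $r(N)=r(N')\le r$ in $poly(m)$ additional time. Summing the three phases gives an overall $f(r)\cdot poly(m)$ running time, as required.

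As for where the difficulty lies, the statement itself is essentially bookkeeping, and the only points demanding genuine care within this proof are the verification that ST-collapsing preserves well-boundedness and the mild administrative step of converting the ``exactly $r$'' guarantee of Lemma~\ref{lem:core} into the ``at most $r$'' guarantee the theorem asks for. The substantive obstacle, namely controlling the branching caused by unsafe trees so that the search tree does not blow up by a factor of $2^{\Theta(|\cX|)}$, has already been discharged inside Lemma~\ref{lemma:subroutineIsCorrect}, which is precisely what well-boundedness was introduced to tame.
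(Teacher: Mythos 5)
Your proposal is correct and takes essentially the same route as the paper, whose proof of Theorem~\ref{thm:retics} is exactly this gluing step: ST-collapse via Lemma~\ref{lem:hybridNumInvariant}, solve the collapsed instance with Lemma~\ref{lem:core}, and lift the solution back via Lemma~\ref{lem:stcollapse1}. You additionally spell out two points the paper leaves implicit---that ST-collapsing preserves well-boundedness (sound, since reducing a non-singleton maximal ST-set replaces at least two child subtrees of the lca by a single dummy leaf, so no node's degree increases) and that the ``$r(N)=r$'' guarantee of Lemma~\ref{lem:core} yields the ``at most $r$'' claim by iterating over the values $0,\dots,r$---both of which are correct and harmless refinements.
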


\section{Minimizing the reticulation number of a set of trees is polynomial-time solvable
for a fixed number of reticulations}

For completeness we show that, even though we do not yet have an FPT result for an
arbitrary set of nonbinary trees on the same set of taxa $\cX$, we do have the following weaker result which shows that
for a fixed number of reticulations the problem is polynomial-time solveable. This is strictly more general
than the (implied) polynomial-time results in \cite{ierselLinz2012} and \cite{linzsemple2009} due to the fact
that here it is permitted to have an unbounded number of non-binary trees in the input.

\begin{theorem}
\label{thm:polyNotFPT}
Let $\cT$ be a set of trees on $\cX$. Then, for every fixed
$r \geq 0$, it is possible to determine in polynomial time - specifically, time $O( m^{f(r)} )$,
where $m=|\cX|+|\cT|$ - whether a network that displays all the trees in $\cT$ with reticulation number at most
$r$ exists (and if so, to return such a network).
\end{theorem}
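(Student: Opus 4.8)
The plan is to exploit the same structural decomposition used throughout the FPT result, but to replace every costly "guessing" step by exhaustive enumeration, tolerating the fact that each such enumeration contributes only a polynomial factor once $r$ is fixed. First I would ST-collapse $\cT$ to obtain $\cT'$; by Lemma \ref{lem:hybridNumInvariant} this preserves the reticulation number and, by Lemmas \ref{lem:stcollapse1} and \ref{lem:hybridNumInvariant}, an optimal solution for $\cT'$ can be converted back to one for $\cT$ in $poly(m)$ time. So without loss of generality $\cT$ is ST-collapsed, and by Lemma \ref{lem:genIsSufficient_bis} it suffices to search for a network in $\mathcal{\hat{N}}^{r}_{\cX}$ that displays all the trees in $\cT$.

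Next I would enumerate, one at a time, all candidate networks in $\mathcal{\hat{N}}^{r}_{\cX}$ and, for each, invoke Proposition \ref{prop:checkReprTree} to check in $f(r)\cdot poly(m)$ time whether it displays all the trees in $\cT$. The crux is therefore to bound $|\mathcal{\hat{N}}^{r}_{\cX}|$ by $m^{f(r)}$. The number of $r$-reticulation generators $G$ is bounded by a function of $r$ alone, since a generator has $O(r)$ nodes and $O(r)$ sides. Given $G$, a completion is determined by how the $n$ taxa are distributed among the $O(r)$ sides together with, on each edge-side, the order of the taxa placed there. The number of ways to assign $n$ taxa to a bounded number of sides and then linearly order those on each edge-side is at most $n^{O(r)}\cdot (n!)$ naively, but the key point is that each edge-side receives an ordered sequence of taxa, so the total count is bounded by the number of ways to partition an ordered list — which is at most $(c_r)^n \cdot n^{O(r)}$ for a constant $c_r$ depending on $r$. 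To stay within $m^{f(r)}$ I would instead observe that the relevant data is: (i) which taxa go on which side (at most $(O(r))^n$ choices, which is \emph{not} polynomial), so a more careful argument is needed here.

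The main obstacle is precisely this naive count, since assigning $n$ taxa freely to $O(r)$ sides already gives $r^{\Theta(n)}$ possibilities, which is exponential in $n$ and hence not $m^{f(r)}$. To overcome this I would not enumerate full completions directly; instead I would enumerate only the \emph{structure} that is bounded by $f(r)$ — namely the generator $G$, the side guesses $S_G$ (empty/short/long, which is $3^{O(r)}$ choices), and the assignment of taxa to the $O(r)$ node-sides and short-sides (polynomially many, since these hold $O(1)$ taxa each). For the long edge-sides, rather than guessing the taxon sets and orders, I would appeal to the fact that once the "skeleton" is fixed the correct placement along each long side is forced (up to the contraction freedom captured by \emph{display}), so the remaining choices reduce to a polynomial-size combinatorial problem solvable by the verification test. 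Concretely, I expect the clean statement to be: the number of distinct completions that \emph{need} to be checked — i.e. those that are candidates for displaying $\cT$ — is $m^{f(r)}$, because the taxa on each long side must respect the tree-imposed constraints and the number of admissible configurations per side is polynomial in $m$ while the number of sides is $f(r)$. Multiplying the $m^{f(r)}$ candidate networks by the $f(r)\cdot poly(m)$ verification cost of Proposition \ref{prop:checkReprTree} yields the claimed $O(m^{f(r)})$ bound; and well-boundedness is not needed here precisely because we pay a full polynomial-in-$m$ factor per side rather than insisting on a polynomial factor that is independent of $|\cT|$.
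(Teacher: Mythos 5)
Your high-level plan coincides with the paper's: ST-collapse via Lemma \ref{lem:hybridNumInvariant}, restrict the search to $\mathcal{\hat{N}}^{r}_{\cX}$ via Lemma \ref{lem:genIsSufficient_bis}, enumerate only an $m^{f(r)}$-bounded amount of structure, and verify each resulting candidate with Proposition \ref{prop:checkReprTree}. You also correctly diagnose the obstacle: freely assigning $n$ taxa to $O(r)$ sides gives $r^{\Theta(n)}$ completions, which is not $m^{f(r)}$. But there is a genuine gap at exactly the point you flag and then paper over. Your claim that, once the ``skeleton'' is fixed, the placement on each long side is forced --- and your fallback claim that ``the number of admissible configurations per side is polynomial in $m$'' --- is asserted, not proved, and it is precisely the technical content of the theorem. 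Worse, the skeleton you actually enumerate (generator, side guesses empty/short/long, and the taxa on node sides and short sides) is insufficient to force anything: knowing only that a side is ``long'' still leaves exponentially many ways to distribute the remaining taxa among the long sides, so nothing in your enumeration breaks the $r^{\Theta(n)}$ count.

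The paper's mechanism, which is what is missing, is this. For each side $s$ with $\geq 2$ taxa it \emph{additionally} guesses the taxon $s^{+}$ nearest the root and the taxon $s^{-}$ furthest from the root on that side ($O(n^{2})$ choices per side, hence still $m^{f(r)}$ over the $O(r)$ sides). It then fills the long sides in ``lowest-first'' order, and with that processing order the membership of a taxon becomes \emph{determined}: $x$ belongs on side $s$ if and only if some cluster $C \in Cl(\cT)$ contains both $x$ and $s^{-}$ but not $s^{+}$. Finally, the order of the taxa on a side is unique because $\rightarrow_{Cl(\cT)}$ induces a total order $s^{+} \rightarrow_{Cl(\cT)} \cdots \rightarrow_{Cl(\cT)} s^{-}$, where acyclicity of this relation uses the assumption that $\cT$ is ST-collapsed (a cycle would induce a nontrivial ST-set). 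Note the argument is purely cluster-based; it does not invoke ``contraction freedom'' of the display relation as you suggest. So after the bounded guessing the configuration per side is not merely polynomial in number but unique, and this uniqueness --- resting on the two guessed endpoints per long side, the lowest-first order, and ST-collapsedness --- is what yields the $m^{f(r)}$ bound on candidates. Without supplying this (or an equivalent determinization), your proposal does not establish the theorem.
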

\begin{proof}
The proof is a straightforward extension of Theorem 1 from \cite{elusiveness} adapted to use $r$-reticulation generators. We sketch the construction here.
Without loss of generality we can assume that $\cT$ is ST-collapsed and that we can restrict our attention to adding taxa to
$r$-reticulation generators. We can also assume without loss of generality that
$r(\cT) \geq r$. Let $N$ be a network that displays all the trees in $\cT$, such that $r(N)=r(\cT)$.
We begin by guessing the correct $r$-reticulation generator for $N$; there are at most $f(r)$ such
generators, and an $r$-reticulation generator has at most $O(r)$ sides. For each side of the generator we guess whether it has 0,1,2 or $> 2$  taxa on it.
For sides with 1 taxon we guess the identity of that taxon. For sides $s$ with $\geq 2$ taxa
we guess the identity of the taxon nearest the root on that side, $s^{+}$, and the taxon furthest
from the root on that side, $s^{-}$. We say that a $> 2$ side
is \emph{lowest} if it does not yet have all its taxa and there
is no other $>2$  side $s' \neq s$ with this property that is reachable by a directed path from the head of $s$.
The algorithm chooses a lowest side and adds all its taxa to it, repeating this until there are no more lowest
sides (i.e. until all $|\cX|$ taxa have been added to the network). When the sides are processed in this order,
a taxon $x$ belongs on side $s$ if and only if there is a cluster $C \in Cl(\stevenc{\cT})$ such that $x$ and $s^{-}$
are both in $C$, but $s^{+}$ is not. Now, once all the taxa for a $> 2$ side $s$ have been identified,
their order on that side is uniquely identified. This follows because the $\rightarrow_{Cl(\cT)}$ relation
imposes a total order $s^{+} \rightarrow_{Cl(\cT)} \ldots \rightarrow_{Cl(\cT)} s^{-}$ on the
taxa. (Note that there cannot be cycles in the $\rightarrow_{Cl(\cT)}$ relation because the
taxa in the cycle would then induce an ST-set, contradicting the assumption that $\cT$ is ST-collapsed  \cite[Proposition 2 (extended)]{softwiredClusterFPT}).
Finally, having added all the taxa to the $r$-reticulation generator we can check in time $f(r).poly(m)$ whether
it displays all the trees in $\cT$. In this way we can identify $N$ in polynomial-time.
\end{proof}

\section{Conclusions and future directions}

\stevenc{
In this article we have described quite broad sufficient conditions under which the computation of reticulation number of a set $\cT$ of trees is FPT. This extends
existing FPT results, which applied to two specific cases: (i) two nonbinary trees, and (ii) an arbitrarily large set of binary trees. The obvious open question that remains is whether computation of reticulation number is still FPT when the ``well-bounded'' condition is lifted i.e. when there are an unbounded number of nonbinary trees in the input
with unbounded maximum degree. We note already the following \stevenc{three} observations, which seem to be important in this regard.
\emph{Primo}, if $|\cT| > 2^{r}$ and $r(\cT) \leq r$ then two or more trees in $\cT$ must have a common binary refinement (because a network with $r$ reticulations can display at most $2^{r}$ distinct trees). 
\emph{Secundo},
a $r$-reticulation generator has at most $5r$ sides i.e. $r$ node sides and at most $4r$ edge sides \cite{softwiredClusterFPT}. Hence, if a tree $T \in \cT$ has a node $u$ such that $u$ has
more than $5r$ children,  then at least two of the taxa reachable by directed paths in $T$ from $u$, \emph{must}
be on the same edge-side of the underlying $r$-reticulation generator in any valid completion.  In \cite{ierselLinz2012} (i.e. when $\cT$ is binary) a similar argument is used to develop a kernelization strategy in which \emph{common chains} have length at most $f(r)$. Specifically, chains with length longer than $5r$ must have at least two taxa allocated to the same edge side, from which can immediately be concluded that the entire common chain can be safely allocated to that side. Unfortunately, in the case of multiple nonbinary trees it is still not entirely clear how \emph{common chains} should be defined and utilized. In particular, it is not clear how to generalise the definition given in \cite{linzsemple2009} for two trees, to the case of multiple trees, such that an FPT running time is obtained \emph{even when the number of trees in the input
is unbounded}. 
\emph{Tertio}, there has so far been little attention for the possibility that the problem is not FPT. The standard
way of proving non-FPT is to show that a problem is $W[1]$-hard \cite{niedermeier2006,Flum2006}. It would be interesting to explore this
possibility further, which would require developing FPT-reductions from (for example) independent set or maximum
clique. Such problems have not yet figured prominently in the phylogenetic network literature, which makes this an
interesting research direction in its own right.
}

\stevenc{
\section{Acknowledgements}
We thank Nela Lekic, Simone Linz and Leo van Iersel for useful discussions.
}




\bibliographystyle{plain}      
\bibliography{wellbounded}

\newpage
\section{Appendix}

Here we discuss some technical points about (bounds on) the size of the input. The first is a loose upper bound on the number of trees in the input.\\
\\
\noindent
\stevenc{\textbf{Observation 3}}. \emph{Let $N$ be a binary network on $\cX$ with $r$ reticulations. If $|\cT| > 2^{r} \cdot 2^{2(|\cX|-1)}$ then $N$ cannot display all the trees in $\cT$.}
\begin{proof}
$N$ can display up to $2^{r}$ binary trees on $\cX$. By contracting all possible subsets of the edges of a binary tree $T$, we generate all possible
nonbinary trees on $\cX$ of which $T$ is a binary refinement (and some non-valid trees too). There are $2(|\cX|-1)$ edges in a binary tree, from which the claim follows.
\end{proof}

The following toy construction shows how $|\cT|$ can become very large as a function of both $r$ and $|\cX|=n$, without introducing any obvious
redundancy in the input. In particular it shows that even if we assume that $\cT$ is ST-collapsed and that no tree in $\cT$ is a refinement of another, there exist $\cT$ for which
$|\cT|$ grows exponentially quickly in both $r$ and $n$.\\
\\
\begin{figure}[h]
\centering
\includegraphics[scale=.3]{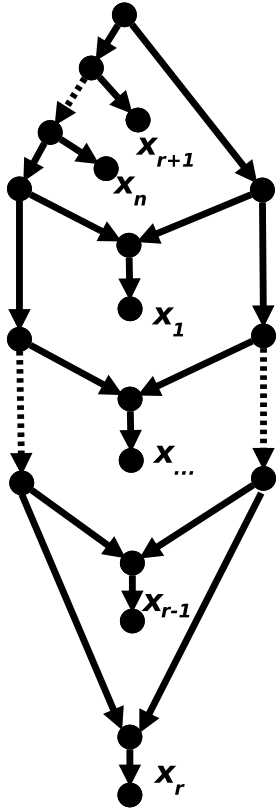}
\caption{A binary network $N$ with $r$ reticulations and $\cX = \{x_1, \ldots, x_{r}, x_{r+1}, \ldots \stevenc{x_n}\}$, where the first $r$ taxa are on node-sides and the remaining $n-r$ taxa
are all on the same edge side.}
\label{fig:exponential}
\end{figure}
Choose $r$ and then choose $n$ sufficiently large with respect to $r$ (we will explain later how to do this). Without loss of generality we assume that $n-r-1$ is odd. Consider the network $N$ shown in Figure \ref{fig:exponential}. Now, construct a set of trees $\cT$ on $\cX$ as follows. Let $B$ be the set of $2^{r}$ binary trees displayed by $N$. For each tree $T' \in B$ we will add a set of nonbinary trees $NB(T')$ to $\cT$, although we will not add
$T'$ itself. Consider the $n-r$ taxa on the \stevenc{top-left} edge side of $N$, and the $n-r-1$ edges that have as endpoints two parents of these taxa. We call these \emph{chain edges}.  The set $NB(T')$ consists
of all trees obtained from $T'$ by contracting exactly $\lceil \frac{n-r-1}{2} \rceil$ of these edges. Hence, 
\[
|\cT| = 2^{r} \binom{\stevenc{n-r-1}}{\lceil \frac{n-r-1}{2} \rceil},
\] which is exponential in both $r$ and $n$.
We first show that there are no trees $T_1, T_2 \in \cT$ such that one is a refinement of the other. To see this, note that if $T_1$ and $T_2$ are in different $NB(.)$ sets, then $T_1$ and $T_2$
cannot be refinements of each other because of the distribution of the taxa $x_1, \ldots, x_r$ in the trees. Suppose then that $T_1$ and $T_2$ both come from the same $NB(.)$ set. Because
of the way we have contracted edges, $T_1$ will have some chain edge that has been contracted in $T_2$, and vice-versa, proving that neither $Cl(T_1) \subseteq Cl(T_2)$ nor $Cl(T_2) \subseteq Cl(T_1)$,
from which we conclude that neither tree is a refinement of the other. Secondly, it can be verified that $\cT$ is ST-collapsed, but we omit the proof.

It remains only to show that $r(\cT)=r$. To argue
this we note that any network with $r$ reticulations can represent at most $2^{r}(2(n-1)-n) + n = 2^{r}(n-2) + n$ clusters. This is the usual cluster bound \stevenc{(see the discussion after Proposition \ref{prop:checkReprTree})} but slightly tightened so that singleton clusters are only counted once. Hence, if we can show that $|Cl(\cT)| >  2^{r-1}(n-2) + n$ then we can immediately conclude that $r(\cT)=r$. (Note that $r(\cT) \leq r$ holds because every tree in $T \in \cT$
is obtained by contracting edges of one of the binary trees $T'$ displayed by $N$ i.e. $T'$ is a refinement of $T$). We claim that $|Cl(\cT)| \geq 2^{r}(n-r-1)$. This holds because, for every
tree $T' \in B$ and every chain edge $e$, there exists some tree in $T \in \cT$ such that $T \in NB(T')$ and the edge $e$ is not contracted in $T$. To conclude, we only have to choose $n$ sufficiently
large that
\[
2^{r}(n-r-1) >  2^{r-1}(n-2) + n.
\]

\noindent
\stevenc{\textbf{Lemma \ref{lem:moveST}.}}
\emph{Given a set of trees $\cT$ on $\cX$, let $N$ be a network displaying all the trees in $\cT$. Let $S$ be a \stevenc{non-singleton} ST-set
with respect to $Cl(\cT)$. Then there exists a network $N'$  displaying all the trees in $\cT$ such that $r(N') \leq r(N)$, 
$S$ is under a cut-edge in $N'$ and for each ST-set $S'$ such that $S' \cap S = \emptyset$ and $S'$ is
under a cut-edge in $N$, $S'$ is also under a cut-edge in $N'$.}
\begin{proof}
This lemma follows from the proof of Lemma 11 of  \cite{elusiveness}, where we described how to transform $N$ into a network $N'$ such that $r(N') \leq r(N)$ 
$S$ is under a cut-edge in $N'$ and for each ST-set $S'$ such that $S' \cap S = \emptyset$ and $S'$ is
under a cut-edge in $N$, $S'$ is also under a cut-edge in $N'$. 
For the sake of completeness, we report here the transformation. 
We obtain $N'$ from $N$ by the following transformation. Let $x$ be any element of $S$ and let $v$ be the node of $N$ labeled by $x$. (a) Delete in $N$ all taxa in $S$
(but not the leaves they label, we will deal with this in step (c)).
(b) Identify $v$ with the root of an arbitrary binary tree $T_{S}$ on $S$ that represents $Cl(\cT)|S$. (c) Tidy up redundant parts of the
network possibly created in step (a) by applying in an arbitrary order
any of the following steps until no more can be applied: deleting any nodes with outdegree-0 that are not labelled
by a taxon; suppressing any nodes with indegree-1 and outdegree-1; replacing any multi-edges with a single edge; deleting
any node with indegree-0 and outdegree-1.

To prove Lemma \ref{lem:moveST}, we still have to show that $N'$ still displays each tree in $\cT$. Let $T$ be a tree of $\cT$ and let $T_N$ be the switching of $N$ \stevenc{corresponding to some refinement of} $T$. In the proof of Lemma 11 of  \cite{elusiveness}, we proved that $T_N$ is modified by the transformation to become a switching $T_{N'}$ of $N'$ such that $T_{N'}$ represents\footnote{Although a switching is not, formally speaking, a phylogenetic tree - because it can have nodes with indegree and outdegree both equal to one, and possibly a redundant node with indegree 0 and outdegree 1 - the definitions of ``represents'' and ``displays'' still hold and behave as expected.} all clusters $C$ of  $T_N$ such that $S \cap C = \emptyset$ or $S \subset C$.
Since  $S$ is compatible with $\cC$, the only other clusters of 
$T_N$ are clusters $C$ such that  $C \subseteq S$. Note that by construction $T_{N'}$ displays $T_S$. This implies that $T_{N'}$ displays the set of clusters $Cl(T_N) \setminus (Cl(T_N)|S)  \cup Cl(T_S)$. Note that by construction,  $Cl(T) \setminus (Cl(T_N)|S)  \cup Cl(T)|S=Cl(T)$. Then, since $Cl(T_S) \supseteq Cl(T)|S$ and $Cl(T_N) \supseteq Cl(T)$, it follows that  $Cl(T_{N'}) \supseteq Cl(T) $. So $T_{N'}$ is a \stevenc{corresponds to a refinement} of $T$ and this concludes the proof.
\end{proof}

\end{document}